\newtheorem{theorem}{Theorem}
\newcommand{\PP}{\mathbb{P}}
\newcommand{\E}{\mathbb{E}}
\newcommand{\sinr}{{\rm SINR}}
\newcommand{\dd}{{\rm d}}
\newcommand{\ie}{{\em i.e., }}
\newcommand{\TR}{T_{\mathrm{FR}}}
\newcommand{\TF}{T_{\mathrm{FFR}}}
\begin{document}

\title{Analytical Evaluation of Fractional Frequency Reuse for OFDMA Cellular Networks}

\author{\IEEEauthorblockN{Thomas David Novlan, Radha Krishna Ganti, Arunabha Ghosh, Jeffrey G. Andrews\\}
}

\maketitle
\pagestyle{fancy}
\thispagestyle{empty}
\fancyhf{} 
\fancyhead[R]{\thepage}
\renewcommand{\headrulewidth}{0pt}
\renewcommand{\footrulewidth}{0pt}

\begin{abstract}
Fractional frequency reuse (FFR) is an interference management technique well-suited to OFDMA-based cellular networks wherein the cells are partitioned into spatial regions with different frequency reuse factors. To date, FFR techniques have been typically been evaluated through system-level simulations using a hexagonal grid for the base station locations. This paper instead focuses on analytically evaluating the two main types of FFR deployments - Strict FFR and Soft Frequency Reuse (SFR) - using a Poisson point process to model the base station locations. The results are compared with the standard grid model and an actual urban deployment. Under reasonable special cases for modern cellular networks, our results reduce to simple closed-form expressions, which provide insight into system design guidelines and the relative merits of Strict FFR, SFR, universal reuse, and fixed frequency reuse. We observe that FFR provides an increase in the sum-rate as well as the well-known benefit of improved coverage for cell-edge users. Finally, a SINR-proportional resource allocation strategy is proposed based on the analytical expressions, showing that Strict FFR provides greater overall network throughput at low traffic loads, while SFR better balances the requirements of interference reduction and resource efficiency when the traffic load is high. 
\end{abstract}

\let\thefootnote\relax\footnotetext{T. D. Novlan, R. K. Ganti, and J. G. Andrews are with the Wireless Networking and Communications Group, the University of Texas at Austin. A. Ghosh is with AT\&T Laboratories. The contact author is J. G. Andrews. Email: jandrews@ece.utexas.edu. This research has been supported by AT\&T Laboratories. Date revised: January 23, 2011} 

\linespread{2}
\section{\label{sec:intro} Introduction}
Modern multi-cellular systems feature increasingly dense base station deployments in an effort to provide higher network capacity as user traffic, especially data traffic, increases. Because of the soon ubiquitous use of Orthogonal Frequency Division Multiple Access (OFDMA) in these networks, the intra-cell users are assumed to be orthogonal to each other and the primary source of interference is inter-cell interference, which is especially limiting for users near the boundary of the cells. Inter-cell interference coordination (ICIC) is a strategy to improve the performance of the network by having each cell allocate its resources such that interference experienced in the network is minimized, while maximizing spatial reuse. 

Fractional frequency reuse (FFR) has been proposed as an ICIC technique in OFDMA based wireless networks \cite{Boudreau2009}. The basic idea of FFR is to partition the cell's bandwidth so that (i) cell-edge users of adjacent cells do not interfere with each other and (ii) interference received by (and created by) cell-interior users is reduced, while (iii) using more total spectrum than conventional frequency reuse. The use of FFR in cellular networks leads to natural tradeoffs between improvement in rate and coverage for cell-edge users and sum network throughput and spectral efficiency. Most prior work resorted to simulations to evaluate the performance of FFR, primarily because of the intractability of the hexagonal grid model of base station locations. In this paper, instead, we model the BS locations as a Poisson point process (PPP).  One advantage of this approach is the ability to capture the non-uniform layout of modern cellular deployments due to topographic, demographic, or economic reasons \cite{Baccelli1995}, \cite{Brown2000}, \cite{Haenggi2009}. Additionally, tractable expressions can be drawn from the Poisson model, leading to more general performance characterizations and intuition \cite{Andrews2010}. 

\subsection{Fractional Frequency Reuse}
There are two common FFR deployment modes: \textit{Strict FFR} and \textit{Soft Frequency Reuse} (SFR). While FFR can be considered in the uplink or downlink, this work focuses on the downlink since it typically supports links with greater rate requirements with a low margin for interference and additionally we can, unlike the uplink, neglect power control by assuming equal power downlinks. 

\subsubsection{Strict FFR} Strict FFR is a modification of the traditional frequency reuse used extensively in multi-cell networks \cite{Begain2002,Sternad2003}. Fig. \ref{fig:FFRmodels}(a) illustrates Strict FFR for a hexagonal grid modeled deployment with a cell-edge reuse factor of $\Delta=3$. Users in each cell-interior are allocated a common sub-band of frequencies while cell-edge users' bandwidth is partitioned across cells based on a reuse factor of $\Delta$. In total, Strict FFR thus requires a total of $\Delta+1$ sub-bands. Interior users do not share any spectrum with exterior users, which reduces interference for both interior users and cell-edge users. 

\subsubsection{SFR} Fig. \ref{fig:FFRmodels}(b) illustrates a SFR deployment with a reuse factor of $\Delta=3$ on the cell-edge. SFR employs the same cell-edge bandwidth partitioning strategy as Strict FFR, but the interior users are allowed to share sub-bands with edge users in other cells. Because cell-interior users share the bandwidth with neighboring cells, they typically transmit at lower power levels than the cell-edge users \cite{Li1999,Huawei2005}. While SFR is more bandwidth efficient than Strict FFR, it results in more interference to both cell-interior and edge users \cite{Doppler2009}. 

\subsection{Related Work and Contributions}
Recent research on FFR has focused on the optimal design of FFR systems by utilizing advanced techniques such as graph theory \cite{Chang2009} and convex optimization \cite{Assad2008} to maximize network throughput. Additional work considers scheduling \cite{Fang2008,Ali2009,Rahman2010} and the authors determine the frequency partitions in a two-stage heuristic approach. These along with other related works utilize the standard equally-spaced grid model for the base stations which do not result in closed or intuitive expressions for $\sinr$, probability of coverage (or outage), or rate, and numerical simulations are used to validate the proposed model or algorithm \cite{Fujii2008,Novlan2010,Alsawah2008,Chen2010}.

The primary contribution of this work is a new analytical framework to evaluate coverage probability and average rate in Strict FFR and SFR systems. These are important metrics to consider, especially for users at the cell-edge since modern cellular networks are increasingly required to provide users with high data-rate and guaranteed quality-of-service, regardless of their geographic location, instead of simply a minimum $\sinr$ which may be acceptable for applications like voice traffic. Through a comparison with an actual urban base station deployment, we show that the grid model provides an upper bound for actual performance since it idealizes real network geometry, while our framework, based on the Poisson model, is a lower bound.

In addition, by considering a special case relevant to interference-limited networks, the analytical expressions for the $\sinr$ distributions reduce to simple expressions which are a function of the key FFR design parameters. We use this analysis to develop system guidelines which show that while Strict FFR provides better coverage probability for edge users than SFR for low power control factors, a SFR system can improve its coverage performance by increasing the cell-edge user power control factor, approaching the performance of per-cell frequency reuse, without the loss in spectral efficiency that is inherent in Strict FFR. Finally, this work presents a strategy for optimally allocating frequency sub-bands to edge users for SFR and FFR based on a chosen threshold $\TR$, which can be related to network traffic load. Numerical results show that the $\sinr$-proportional resource allocation strategy gives insight for choosing FFR parameters that maximize sum rate over universal or per-cell reuse while efficiently allocating sub-bands to provide increased coverage to edge users for given traffic load or coverage requirements. In the next section, we provide a detailed description of the system model and our assumptions.
 
\section{\label{sec:model}System Model}
We consider an OFDMA cellular downlink. We assume that the mobile user is served by its closest base station. The base station locations are distributed 
as a spatial Poisson point process (PPP). We assume that all the BSs transmit with an equal power $P$. The path loss exponent is given by $\alpha$, and ${\sigma}^2$ is the noise power. We assume that the small-scale fading between any BS $z$ and and the typical mobile in consideration, denoted by $g_z$, is i.i.d exponentially distributed with mean $\mu$ (corresponds to Rayleigh fading). The set of interfering base stations is $\mathcal{Z}$, i.e. base stations that use the same sub-band as user $y$. We denote the distance between the interfering BS $z$ in and the mobile node in consideration $y$ by $R_z$. 

The associated Signal to Interference Plus Noise Ratio ($\sinr$) is given as
\begin{equation}
\label{eq:SINR}
\sinr = \frac{Pg_yr^{-\alpha}}{{\sigma}^2 + PI_r},
\end{equation}
where for an interfering BS set $\mathcal{Z}$,
\begin{equation}
\label{eq:FFRIR}
I_r = \displaystyle\sum_{z\in \mathcal{Z}}{g_z{R_z}^{-\alpha}}. 
\end{equation}
In the above expression, we have assumed that the nearest BS to the mobile $y$ is at a distance $r$, which is a random variable.

Additionally, Strict FFR and SFR classify two types of users: \textit{edge} and \textit{interior} users. These classifications come from the typical grid model assumption for the base stations in which constant $\sinr$ contours can be defined as concentric circles around the central base station \cite{Hernandez2009}. In this work however, since the BS locations are distributed as a PPP, the term edge or interior user does not have the same geographic interpretation. Each cell is a Voronoi region with a random area \cite{Haenggi2009} which, as noted in \cite{Brown2000}, more closely reflects actual deployments which are highly non-regular and provides a lower bound on performance metrics due to the lack of repulsion between base stations, which may be arbitrarily close together. Instead, a more general case is considered, in which a base station classifies users with average $\sinr$ less than a pre-determined threshold $\TR$ as edge users, while users with average $\sinr$ greater than the threshold are classified as interior users. Thus the FFR threshold $\TR$ is a design parameter analogous to the grid-based interior radius \cite{Novlan2010}.  

In the case of SFR, inter-cell interference $I_r$ no longer comes from disjoint sets of interior and edge downlinks, but can come from either set and coarse power control is typical \cite{Huawei2005}. To accomplish this, a power control factor $\beta \geq 1$ is introduced to the transmit power to create two different classes, $P_{\rm int} = P$ and $P_{\rm edge} = \beta P$, where $P_{\rm int}$ is the transmit power of the base station if user $y$ is an interior user and $P_{\rm edge}$ is the transmit power of the base station if user $y$ is a cell-edge user.

The interfering base stations are also separated into two classes: $\mathcal{I}_{\rm int}$, which consists of all interfering base stations transmitting to cell-interior users on the same sub-band as user $y$ (at power $P_{\rm int}$) and $\mathcal{I}_{\rm edge}$, which consists of all interfering base stations transmitting to cell-edge users on the same sub-band as user $y$ (at power $P_{\rm edge}$). For a cell-edge user $y$, the resulting out-of-cell interference expression $I_r$ with SFR is given as
\begin{equation}
\label{eq:SFRIR}
I_r = \displaystyle\sum_{i\in \mathcal{I}_{\rm int}}{g_{i}{R_{i}}^{-\alpha}} + \beta \displaystyle\sum_{i\in \mathcal{I}_{\rm edge}}{g_i}{R_{i}}^{-\alpha} 
\end{equation}
Typical analysis of SFR uses values of 2-20 for $\beta$, although this choice is usually based on heuristic results \cite{Doppler2009,Mazin2010}.

\section{\label{sec:coverage} Coverage Probability}
This section presents general coverage probability expressions and numerical results for the two FFR systems. Coverage is an important metric to consider since it can have a large impact on cell-edge user QoS and when combined with resource efficiency, can give an overall picture of cell/network capacity. In the context of this paper, we define coverage probability $p_c$ as the probability that a user's instantaneous $\sinr$ is greater than a value $T$:
\begin{equation}
\label{eq:coverage}
p_c = \PP(\sinr > T)   
\end{equation}
This coverage probability $p_c$ is equivalently the CCDF of the $\sinr$ for a particular reuse strategy, which we will denote as $\bar{F}(T)$. 

In the case of past work, using the grid model, base stations are assumed to be on a hexagonal or rectangular grid, allowing these expressions to be numerically computed. Also, approximations using the symmetric structure of the far-out tiers in the deployment may be employed, or a worst-case user location at the edge of the cell may be considered \cite{Elayoubi2007}. However, the results of this section take advantage of the framework recently developed in \cite{Andrews2010}. The base station locations are instead modeled as a Poisson point process (PPP). Despite the new source of randomness in the model, the authors of \cite{Andrews2010} in Theorem 4 give a general expression for the coverage probability of a typical mobile as a function of the $\sinr$ threshold $T$ for a given base station density $\lambda$, pathloss factor $\alpha$ and $\Delta$ number of frequency sub-bands as 
\begin{equation}
p_c(T,\lambda,\alpha,\Delta) = \pi \lambda \int_0^\infty  e^{-\pi \lambda v (1 + \frac{1}{\Delta}\rho(T,\alpha)) -\mu T \frac{\sigma^2}{P} v^{\alpha/2}} \dd v,
\label{eq:pc}
\end{equation}
where
\begin{equation}
\rho(T,\alpha)= T^{2/\alpha}\int_{T^{-2/\alpha}}^\infty \frac{1}{1+u^{\alpha/2}} \dd u.
\end{equation}
We now provide the distribution of $\sinr$ for Strict FFR and SFR.

\subsection{Strict FFR} The first result using the Poisson model focuses on the $\sinr$ distribution of cell edge users. In the case of Strict FFR, these are the users who have $\sinr$ less than the reuse threshold $\TR$ on the common sub-band shared by all cells and are therefore selected by the reuse strategy to have a new sub-band allocated to them from the $\Delta$ total available sub-bands reserved for the edge users. 

\begin{theorem}[Strict FFR, edge user]
\label{thm:FFRe}
The coverage probability of an edge user in a strict FFR system, assigned a FFR sub-band is 
\[\bar{F}_{\mathrm{FFR,e}}(T) =\frac{p_{c}(T,\lambda,\alpha,\Delta) - \pi \lambda \int_0^\infty  e^{-\pi \lambda v (1 + 2\xi(T,\TR,\alpha,\Delta)) -\mu (T+\TR) \frac{\sigma^2}{P} v^{\alpha/2}} \dd v}{1-p_c(\TR,\lambda,\alpha,1)},\]
where 
\[\xi(T,\TR,\alpha,\Delta)= \int_1^\infty\left[1- \frac{1}{1+\TR x^{-\alpha}}\left(1-\frac{1}{\Delta}\left(1-\frac{1}{1+T x^{-\alpha}}\right) \right) \right] x \dd x,\]
and $p_c(T,\lambda,\alpha,\Delta)$ is given by (\ref{eq:pc}).
\end{theorem}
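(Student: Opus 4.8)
The plan is to write the edge-user coverage probability as a conditional probability and reduce its numerator to a joint coverage event that can be evaluated with the same Poisson point process machinery underlying (\ref{eq:pc}). By definition, a cell-edge user is one whose instantaneous $\sinr$ on the common sub-band, call it $\sinr_c$, falls below the reuse threshold $\TR$, and coverage is the event that the $\sinr$ on the allocated FFR sub-band, call it $\sinr_f$, exceeds $T$. I would therefore start from
\[
\bar{F}_{\mathrm{FFR,e}}(T) = \PP(\sinr_f > T \mid \sinr_c < \TR) = \frac{\PP(\sinr_f > T,\, \sinr_c < \TR)}{\PP(\sinr_c < \TR)}.
\]
The denominator is immediate: on the common sub-band every cell reuses the whole band, so $\PP(\sinr_c > \TR) = p_c(\TR,\lambda,\alpha,1)$ by (\ref{eq:pc}) with $\Delta=1$, whence $\PP(\sinr_c < \TR) = 1 - p_c(\TR,\lambda,\alpha,1)$. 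For the numerator I would use $\PP(\sinr_f > T,\sinr_c < \TR) = \PP(\sinr_f > T) - \PP(\sinr_f > T,\sinr_c \ge \TR)$, where the first term equals $p_c(T,\lambda,\alpha,\Delta)$ by another application of (\ref{eq:pc}).

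The crux of the argument, and the step I expect to be the main obstacle, is the joint term $\PP(\sinr_f > T,\sinr_c \ge \TR)$. The two events are \emph{correlated}, because the edge user is served by the same base station at the same distance $r$ and is interfered with by the same set of base-station locations; yet the small-scale fading and the FFR sub-band assignment are \emph{independent} across the two bands. The strategy is to condition on $r$ and on the interferer point process and exploit this conditional independence. Writing the two interference fields as $I_r^{(f)}$ and $I_r^{(c)}$ over a common set of locations, I would first average over the two independent signal fadings; since $\sinr_f > T \iff g_f > T r^\alpha(\sigma^2/P + I_r^{(f)})$ and likewise for $\sinr_c$, this produces $\exp(-\mu T r^\alpha(\sigma^2/P + I_r^{(f)}))\exp(-\mu\TR r^\alpha(\sigma^2/P + I_r^{(c)}))$, whose noise factors combine into $\exp(-\mu(T+\TR)r^\alpha\sigma^2/P)$; this is exactly the $(T+\TR)$ coupling that appears in the theorem.

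It remains to average over the interference. Because the per-band fadings and the thinning indicators (an interferer reuses the FFR sub-band with probability $1/\Delta$) are independent across distinct base stations \emph{and} across the two bands, the expectation of $\exp(-\mu T r^\alpha I_r^{(f)} - \mu\TR r^\alpha I_r^{(c)})$ factorizes over the interferers. For a single interferer at normalized distance $x = R_z/r$, averaging the common-band exponential gives $\tfrac{1}{1+\TR x^{-\alpha}}$, while averaging the FFR-band exponential jointly with the thinning gives $1 - \tfrac{1}{\Delta}(1 - \tfrac{1}{1+T x^{-\alpha}})$; their product is precisely one minus the integrand of $\xi$. Applying the probability generating functional of the PPP over the interference region $\{R_z > r\}$ and substituting $x = s/r$ converts this product into $\exp(-2\pi\lambda r^2\,\xi(T,\TR,\alpha,\Delta))$. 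Finally I would weight by the nearest-neighbour density $2\pi\lambda r\,e^{-\pi\lambda r^2}$ and substitute $v = r^2$, collecting the nearest-neighbour, interference, and noise exponents into $\exp(-\pi\lambda v(1+2\xi) - \mu(T+\TR)\tfrac{\sigma^2}{P}v^{\alpha/2})$; this reproduces the displayed integral for $\PP(\sinr_f > T,\sinr_c \ge \TR)$, and substituting into the conditional expression completes the proof. The only genuine care needed is the bookkeeping of which randomness is shared (locations and serving distance) versus independent (per-band fading and thinning): treating the two coverage events as independent would destroy both the $(T+\TR)$ noise term and the combined exponent $2\xi$.
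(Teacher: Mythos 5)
Your proposal is correct and follows essentially the same route as the paper's proof in Appendix~A: Bayes' rule with denominator $1-p_c(\TR,\lambda,\alpha,1)$, decomposition of the numerator into $p_c(T,\lambda,\alpha,\Delta)$ minus a joint term, evaluation of that joint term as the joint Laplace transform of the two correlated interference fields (conditioning on $r$ and the base-station locations, factorizing over interferers via the independent per-band fading and the $1/\Delta$ thinning), application of the PGFL, and de-conditioning on $r$ with the substitution $v=r^2$. The point you flag as the crux --- that the shared locations and serving distance couple the two events while fading and sub-band assignment remain independent --- is exactly the mechanism the paper exploits.
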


\begin{proof}
The proof is given in Appendix A.
\end{proof}

While $\xi(T,\TR,\alpha,\Delta)$ is reminiscent of $\rho(T,\alpha)$ given by prior results in \cite{Andrews2010}, it differs due to the dependence of the user's $\sinr$ before and after the assignment of the new FFR sub-band. This is from the fact that while the interference power and the user's fading values have changed, the location of the user relative to the base stations has not changed, and the thus the dominant path loss remains the same.

Now we turn our attention to the interior users in the case of Strict FFR. The coverage probability of the inner user does not depend on $\Delta$ since the user is allocated a sub-band shared by all base stations.

\begin{theorem}[Strict FFR, interior user]
\label{thm:FFRi}
The coverage probability of the interior user with Strict FFR is 
\[\bar{\mathrm{F}}_\mathrm{FFR,i}(T))=\frac{p_c(\max\{T,\TR\},\lambda,\alpha,1)}{p_c(\TR,\lambda,\alpha,1)},\]
and $p_c(T,\lambda,\alpha,\Delta)$ is given by (\ref{eq:pc}).
\end{theorem}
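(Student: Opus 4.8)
The plan is to exploit the one structural feature that separates the interior user from the edge user: because the interior user is served on the common sub-band used by every base station, the classification as \emph{interior} does not move the user to a new sub-band, and hence the user's $\sinr$ is left unchanged. This is in sharp contrast to the edge-user analysis of Theorem~\ref{thm:FFRe}, where reassignment to a reserved sub-band alters both the interfering set and the fading realizations, forcing the appearance of the two-argument function $\xi$. Here no such coupling arises, and the result should follow from elementary conditioning.

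First I would make the conditioning explicit. An interior user is by definition one whose $\sinr$ on the common sub-band exceeds the reuse threshold $\TR$, so the coverage probability of an interior user is the conditional probability
\[
\bar{F}_{\mathrm{FFR,i}}(T) = \PP\big(\sinr > T \,\big|\, \sinr > \TR\big)
= \frac{\PP\big(\sinr > T,\ \sinr > \TR\big)}{\PP\big(\sinr > \TR\big)},
\]
where both occurrences of $\sinr$ refer to the \emph{same} random variable, namely the instantaneous $\sinr$ of the user on the common sub-band. The crucial point is that, since the user never changes sub-bands, there is no second, statistically-coupled $\sinr$ to track; the event defining the class and the event defining coverage are events for one and the same quantity.

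Next I would collapse the joint event. Because $\{\sinr > T\}\cap\{\sinr > \TR\} = \{\sinr > \max\{T,\TR\}\}$, the numerator becomes $\PP(\sinr > \max\{T,\TR\})$. Both numerator and denominator are then ordinary coverage probabilities on a single, fully reused sub-band, so I would invoke the general expression~(\ref{eq:pc}) with reuse factor $\Delta = 1$, reflecting that every base station transmits on the common band and therefore contributes to $I_r$. This yields numerator $p_c(\max\{T,\TR\},\lambda,\alpha,1)$ and denominator $p_c(\TR,\lambda,\alpha,1)$, whose ratio is exactly the claimed formula; one also checks the two regimes $T \ge \TR$ and $T < \TR$ separately to confirm the $\max$ captures both.

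The only genuine obstacle is conceptual rather than computational: one must justify that the interior user's $\sinr$ is truly unchanged by the classification, so that the classification event and the coverage event are driven by a single random variable and the joint probability telescopes to a single $\max$ condition. Once this is granted, the statement follows immediately from the definition of conditional probability together with a direct application of~(\ref{eq:pc}) at $\Delta=1$, with no integral manipulation required.
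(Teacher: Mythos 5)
Your proposal is correct and follows essentially the same route as the paper: condition on the classification event $\{\sinr > \TR\}$, collapse the joint event to $\{\sinr > \max\{T,\TR\}\}$ since the interior user keeps the common sub-band and thus the same $\sinr$ random variable, and evaluate numerator and denominator via (\ref{eq:pc}) with $\Delta=1$. Your explicit justification that no reassignment occurs (so no coupling like the edge-user case arises) is exactly the point the paper relies on, stated perhaps more carefully than the paper itself does.
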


\begin{proof}
Starting from (\ref{eq:coverage}) and applying Bayes' rule, 
\begin{equation}
\bar{\mathrm{F}}_{\mathrm{FFR,i}}(T)=\mathbb{P}\left(\frac{P\hat{g_y} r^{-\alpha}}{\sigma^2+P{I_r}} > T ~\bigg|~ \frac{Pg_yr^{-\alpha}}{\sigma^2+P I_r} > \TR \right) \nonumber
\end{equation}
\begin{eqnarray}
&=&\frac{\mathbb{P}\left(\frac{P\hat{g_y} r^{-\alpha}}{\sigma^2+P I_r} > \max\{T,\TR\}\right)}{\mathbb{P}\left( \frac{Pg_yr^{-\alpha}}{\sigma^2+P I_r} > \TR \right)} \nonumber \\
&=& \frac{p_c(\max\{T,\TR\},\lambda,\alpha,1)}{p_c(\TR,\lambda,\alpha,1)}. \nonumber
\end{eqnarray}
\end{proof}

The $\max\{T,\TR\}$ term in the numerator is a result of interior users having $\sinr \geq \TR$ by definition. Also, because the interior users of all the cells share the same sub-band, the $\sinr$ CCDF is closely related to the results of \cite{Andrews2010} for users with no frequency reuse.

\subsection{SFR}
There are two major differences between SFR and Strict FFR. One is the use of power control, rather than frequency reuse for the edge users, controlled by the design parameter $\beta$. Additionally, the base stations can reuse all sub-bands, but apply $\beta$ to only one of the $\delta$ sub-bands. The interference power is given by (\ref{eq:SFRIR}), thus the interference term is denoted as $\eta P I_r$ where $\eta = (\Delta - 1  + \beta)/\Delta$ is the effective interference power factor, consolidating the impact of interference from the higher and lower power downlinks. We now consider the CCDF for SFR starting again with edge users, followed by the interior user case. 

\begin{theorem}[SFR, edge user]
\label{thm:SFRe}
The coverage probability of an SFR edge user whose initial $\sinr$ is less than $\TR$ is 
\[\bar{\mathrm{F}}_{\mathrm{SFR,e}}(T)=\frac{p_c(\frac{\eta T}{\beta},\lambda,\alpha,\Delta) - \pi \lambda \int_0^\infty e^{-\pi \lambda v (1 + 2\xi(T,\TR,\alpha,\Delta,\eta,\beta)) -\mu \eta(\frac{T}{\beta}+\TR) \frac{\sigma^2}{\beta P} v^{\alpha/2}} \dd v}{1-p_c(\TR,\lambda,\alpha,\Delta)},\]
where 
\[\xi(T,\TR,\alpha,\Delta,\eta,\beta)= \int_1^\infty\left[1- \frac{1}{1+\eta\TF x^{-\alpha}}\left(\frac{1}{1+\frac{\eta T}{\beta} x^{-\alpha}}\right) \right] x \dd x.\]
\end{theorem}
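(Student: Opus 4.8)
The plan is to follow the same conditional-probability template used for Theorem~\ref{thm:FFRe}, adapted to the power-control mechanism of SFR. A user is an SFR edge user precisely when its classification $\sinr$ (measured before the power boost) falls below $\TR$; once reclassified it is served at the boosted power $\beta P$ while each interferer is treated as radiating at the effective power $\eta P$ with $\eta=(\Delta-1+\beta)/\Delta$. Writing $\sinr_b$ for the classification value and $\sinr_a$ for the post-reallocation value, I would start from the coverage definition (\ref{eq:coverage}) and apply Bayes' rule to obtain
\[
\bar{\mathrm{F}}_{\mathrm{SFR,e}}(T)=\frac{\PP(\sinr_a>T,\ \sinr_b<\TR)}{\PP(\sinr_b<\TR)} .
\]
The denominator is an ordinary marginal coverage event and evaluates to $1-p_c(\TR,\lambda,\alpha,\Delta)$ directly from (\ref{eq:pc}), so all the work concentrates on the numerator.

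For the numerator I would split
\[
\PP(\sinr_a>T,\ \sinr_b<\TR)=\PP(\sinr_a>T)-\PP(\sinr_a>T,\ \sinr_b>\TR).
\]
The first term is again a single marginal coverage probability; the only new ingredient is constant-bookkeeping, since the boosted signal power $\beta P$ together with the effective interference power $\eta P$ rescales the threshold so that $\PP(\sinr_a>T)=p_c(\tfrac{\eta T}{\beta},\lambda,\alpha,\Delta)$. The essential object is therefore the joint probability in the second term.

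To handle the joint term I would condition on the serving distance $r$, which is common to $\sinr_a$ and $\sinr_b$ because reclassification alters the fading realizations and the transmit/interference powers but not the user's position relative to the base stations. Given $r$ and the interferer point process, the events $\{\sinr_a>T\}$ and $\{\sinr_b>\TR\}$ become conditionally independent, since the fading on the edge sub-band is independent of that on the classification band; each event then has probability of the form (noise factor)$\times$(product over interferers of an exponential-fading Laplace factor). These per-interferer factors are $\tfrac{1}{1+\eta\TR x^{-\alpha}}$ for the classification event and $\tfrac{1}{1+\frac{\eta T}{\beta}x^{-\alpha}}$ for the edge event, the effective power $\eta$ entering through the interference scaling. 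Multiplying the two conditional probabilities keeps a single product over the shared point process, after which I would invoke the probability generating functional of the PPP over the interferers lying outside radius $r$, substitute $x=R/r$, and recognize the resulting spatial integral as $\xi(T,\TR,\alpha,\Delta,\eta,\beta)$. Averaging over the nearest-neighbor distance with density $\pi\lambda e^{-\pi\lambda v}$ (where $v=r^2$) then produces the stated integral, with the combined noise exponent $\mu\eta(\tfrac{T}{\beta}+\TR)\tfrac{\sigma^2}{\beta P}v^{\alpha/2}$ assembling the two noise contributions.

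The main obstacle is this joint term: unlike a product of two independent coverage events, $\sinr_a$ and $\sinr_b$ are coupled through the shared serving distance $r$ and the common interferer locations, so the factorization is valid only after conditioning on $r$ and the point process. The other delicate point is keeping the constants consistent — the power-control factor $\beta$ and the effective interference factor $\eta$ must be threaded through both the desired-signal term and the interference Laplace transform, which is exactly what yields the rescaled thresholds $\eta T/\beta$ and $\eta\TR$ inside $\xi$ and the $\eta/\beta$ scaling of the noise term; this is where the derivation is most error-prone.
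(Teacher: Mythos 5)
Your proposal follows essentially the same route as the paper's Appendix B proof: Bayes' rule on the conditional CCDF, splitting the numerator into $p_c(\tfrac{\eta T}{\beta},\lambda,\alpha,\Delta)$ minus a cross term, evaluating that cross term as the joint Laplace transform of the old and new interference by conditioning on $r$ and the shared point process (independent fading on the two bands makes the per-interferer Laplace factors multiply, after which the PGFL is applied to the single product over interferers beyond radius $r$, with no $1/\Delta$ thinning since all SFR base stations interfere), and finally de-conditioning on $r$ with the nearest-neighbor density. The only deviations are in the bookkeeping of the $\eta$, $\beta$, and noise constants, where you reproduce the theorem statement's form and the paper's own appendix is itself not fully consistent with that statement.
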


\begin{proof}
The proof is given in Appendix B.
\end{proof}
The expressions only differ slightly from Strict FFR. The inclusion of $\eta$ and $\beta$ can be viewed as creating effective $\sinr$ and FFR thresholds $\frac{\eta T}{\beta}$ and $\eta\TR$ respectively.

For SFR, the CCDF of the interior user, $\bar{\mathrm{F}}_{\mathrm{i}}(T)$ is found in the same manner as Strict FFR.
\begin{theorem}[SFR,interior user]
\label{thm:SFRi}
\[\bar{\mathrm{F}}_{\mathrm{SFR,i}}(T)=\frac{p_c(\eta\max\{T,\TR \},\alpha,1)}{p_c(\eta \TR,\alpha,1)}.\]
\end{theorem}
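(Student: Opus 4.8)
The plan is to mirror the derivation of the interior-user result in Theorem \ref{thm:FFRi}, replacing the raw out-of-cell interference $P I_r$ by the effective interference $\eta P I_r$ that (\ref{eq:SFRIR}) and the definition $\eta=(\Delta-1+\beta)/\Delta$ prescribe for SFR. First I would begin from the coverage definition (\ref{eq:coverage}) and apply Bayes' rule to write $\bar{\mathrm{F}}_{\mathrm{SFR,i}}(T)$ as the probability that an interior user's $\sinr$ exceeds $T$, conditioned on that user having been classified as interior, i.e.\ on its $\sinr$ exceeding $\TR$. The essential observation, exactly as in Theorem \ref{thm:FFRi}, is that an interior user keeps its power $P_{\rm int}=P$ and stays on the common sub-band, so its fading $g_y$, its serving distance $r$, and its entire interference field are unchanged between the conditioning event and the target event. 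The two events are therefore nested, Bayes' rule collapses, and the $\sinr\ge\TR$ floor forces a $\max\{T,\TR\}$ in the numerator:
\[
\bar{\mathrm{F}}_{\mathrm{SFR,i}}(T)=\frac{\PP\!\left(\dfrac{Pg_y r^{-\alpha}}{\sigma^2+\eta P I_r}>\max\{T,\TR\}\right)}{\PP\!\left(\dfrac{Pg_y r^{-\alpha}}{\sigma^2+\eta P I_r}>\TR\right)}.
\]

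Next I would evaluate each $\sinr$ tail probability and identify it with $p_c$. Conditioning on the serving distance $r$ and using that $g_y$ is exponential, each tail factors into a noise term and the Laplace transform of $I_r$, with the threshold $s\in\{\max\{T,\TR\},\TR\}$ entering the interference contribution as $\eta s$. Because SFR reuses every sub-band, the interfering process is the full PPP, so the relevant comparison is with $p_c(\cdot,\lambda,\alpha,1)$: the interference (Laplace) factor of $\PP(Pg_yr^{-\alpha}/(\sigma^2+\eta P I_r)>s)$ coincides exactly with that of $p_c(\eta s,\lambda,\alpha,1)$, since multiplying the interference by $\eta$ is indistinguishable from scaling the threshold by $\eta$. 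Substituting $s=\max\{T,\TR\}$ in the numerator and $s=\TR$ in the denominator then yields the claimed ratio.

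The crux — and the one step that needs care — is that $\eta$ can be absorbed as a pure rescaling of the $\sinr$ threshold inside $p_c$. This absorption is clean for the interference term but not for the noise: scaling the interference by $\eta$ leaves the noise contribution attached to the unscaled threshold $s$, whereas $p_c(\eta s,\lambda,\alpha,1)$ carries $\eta s$ in its noise term as well. The two match exactly only when $\sigma^2=0$, and in that interference-limited regime $p_c$ in (\ref{eq:pc}) becomes independent of $\lambda$ (the $\pi\lambda v$ factor cancels upon integration). This is precisely why the statement is written with $p_c(\eta\max\{T,\TR\},\alpha,1)$ and $p_c(\eta\TR,\alpha,1)$, i.e.\ with the density argument suppressed. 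I would therefore carry out the argument in the interference-limited case, noting that a noise-carrying version would instead retain the explicit integral form used for the edge user in Theorem \ref{thm:SFRe}.
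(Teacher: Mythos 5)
Your proof is correct and takes essentially the same route as the paper, whose own proof is just the one-line remark that the result ``follows from the definition and Theorem \ref{thm:FFRi}''; your fleshed-out version (nested events under Bayes' rule producing the $\max\{T,\TR\}$, then absorbing $\eta$ into the threshold of $p_c$ with $\Delta=1$ since SFR interferers form the full PPP) is exactly what that remark leaves implicit. Your caveat about noise is also well taken: the identification $\PP\left(Pg_yr^{-\alpha}/(\sigma^2+\eta P I_r)>s\right)=p_c(\eta s,\lambda,\alpha,1)$ is exact only in the interference-limited case $\sigma^2=0$ (otherwise the noise exponent carries $\mu s$ rather than $\mu \eta s$), a subtlety the paper silently glosses over both here and in Theorem \ref{thm:SFRe}.
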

\begin{proof}
Follows from the definition of $\mathrm{F_{\mathrm{SFR},i}}(T)$, and Theorem \ref{thm:FFRi}.
\end{proof}
Again, the CCDF is similar in structure to the Strict FFR case. Also, since for interior users there is no extra $\beta$ power control in their transmit power, only the effective interference power factor $\eta$ remains in the expressions.

\section{\label{sec:discussion}Discussion of the Model}
In this section we present a special case where the coverage probability results of Section \ref{sec:coverage} can be significantly further simplified. As we will see, this allows much clearer insight into the performance of cell-edge users, something not previously possible with the grid model. We also provide a discussion of intuitive lower and upper bounds of the $\sinr$ distribution for edge users under SFR as well as a numerical comparison with the grid model and actual base station deployments for the different reuse strategies.

\subsection{\label{sec:FFRclosed}Strict FFR: No-noise and $\alpha=4$}
In the case of $\alpha=4$ and no noise, closed form expressions can be derived for the edge user coverage probability. In the case of Strict FFR, when $T\neq \TR$,
\begin{eqnarray}
\label{eq:FFRsimp}
\bar{\mathrm{F}}_\mathrm{FFR,e}(T) = \frac{1 + \rho(\TR,4)}{\rho(\TR,4)}\left(\frac{1}{1 + \frac{1}{\Delta}\rho(T,4)} - \frac{1}{1 + 2\xi(T,\TR,\lambda,4,\Delta)}\right)
\end{eqnarray}
where
\begin{eqnarray}
\xi(T,\TR,\lambda,4,\Delta) &=& \frac{2T^{3/2}\arctan(\frac{1}{\sqrt{T}})-\left(2\arctan(\frac{1}{\sqrt{\TR}})\right)\left({\TR}^{3/2}\Delta -\left(T\sqrt{\TR}\right)\left(1+\Delta\right)\right)}{4\Delta(\TR-T)}-\nonumber\\&~&\frac{\pi(T^{3/2}+{\TR}^{3/2})\Delta+(\pi T\sqrt{\TR})(1-\Delta)}{4\Delta(\TR-T)}.
\end{eqnarray}
We also note that for $\alpha=4$ and no noise, $\rho(T,4)$ has a closed form as well \cite{Andrews2010}. However, when $T =\TR$, the expression has an indeterminate form. By evaluating the limit $T \to \TR$, this simplifies to
\begin{align}
\xi(T,\TR,\lambda,4,\Delta) = \frac{\sqrt{\TR}\pi(2 \Delta +1)}{8\Delta}+\frac{2 \TR}{8(\TR+1)\Delta}-\frac{ \sqrt{\TR}(2\Delta+1)\cot^{-1}\left(\sqrt{\TR}\right)}{4\Delta}.
\end{align}
As a result of the assumptions, corresponding to interference-limited, urban cellular networks \cite{Ghosh2010}, we see that the $\sinr$ distribution of Strict FFR edge users are simply a function of the $\sinr$ threshold $T$ and the reuse threshold $\TR$. The reuse threshold determines whether a user is switched to a reuse-$\Delta$ sub-band. Although not given here, it is clear that the same applying this special case to the interior users would result in similarly closed form and simple expressions.

\subsection{\label{sec:SFRclosed}SFR: No-noise and $\alpha=4$}
Likewise for the SFR case,
\begin{eqnarray}
\label{eq:SFRsimp}
\bar{\mathrm{F}}_\mathrm{SFR,e}(T) = \frac{1 + \rho(\eta \TR,4)}{\rho(\eta \TR,4)}\left(\frac{1}{1 + \rho(\frac{\eta}{\beta} T,4)} - \frac{1}{1 + 2\xi(T,\TR,\lambda,4,\Delta)}\right)
\end{eqnarray}
where
\begin{eqnarray}
\xi(T,\TR,\lambda,4,\Delta) &=& \frac{{\eta}^{3/2}T\beta}{4\sqrt{\TR}(T-\TR\beta)}-\frac{\eta\beta{T}^3\left(2\arctan\left(\sqrt{\frac{\beta}{\eta T}}\right)+\pi\right)}{(T-\TR\beta)}+\nonumber\\
&~&\frac{\eta {T}^{3/2}{\TR}^{3/2}\beta^{5/2}\left(2\arctan\left(\frac{1}{\sqrt{\eta \TR}}\right)-\pi\right)}{(T-\TR\beta)}.
\end{eqnarray}

When $\TR = T$, the limit $\TR \to T$ simplifies as,
\begin{align}
\xi(T,\TR,\lambda,4,\Delta) &=& \frac{\sqrt{\eta\TR}\left(2\arctan\left(\frac{\beta}{\sqrt{(\eta(\TR)\beta)}}\right)-\pi\right)}{{4\sqrt{\beta}\left(\beta-1\right)}}-\frac{2\beta\sqrt{\eta\TR}\arctan\left(\frac{1}{\sqrt{\eta\TR}}\right)+\pi}
{4\left(\beta-1\right)}.
\end{align}
Once the again, the $\sinr$ distribution in \eqref{eq:SFRsimp} is only a function of the $\sinr$ threshold $T$ and in this case, the two SFR design parameters, the reuse threshold $\TR$ and the power control factor $\beta$, which influences the effective interference factor $\eta$. In the next several sections, we will exploit this simple structure to compare Strict FFR and SFR with other reuse strategies and evaluate their relative performance as a function of the design parameters.

\subsection{Comparison with No reuse / Standard Frequency Reuse}
Comparing the $\sinr$ distributions derived in Section \ref{sec:FFRclosed} and \ref{sec:SFRclosed} for edge users with those of a no-frequency reuse and standard reuse-$\Delta$ system as derived in \cite{Andrews2010} provides insight into the relative merits and tradeoffs associated with FFR. In Fig. \ref{fig:Pcomp} we plot the four systems with $\sigma^2=0$, $\alpha=4$, and $\TR = 1$ dB and compare the analytical expressions with Monte-Carlo simulations. Intuition for these results can be seen from the proofs of Theorems \ref{thm:FFRe} and \ref{thm:SFRe}. Since the probability is degraded multiplicatively by the interfering downlinks, the number of sources of interference drives the outage. When $\Delta=3$, we see that only 33\% of the base stations causing downlink interference to edge users under universal frequency reuse are active on the same resources under Strict FFR. However, SFR allows adjacent base stations to serve interior users on the same sub-bands used by adjacent edge users, increasing interference and lowering coverage. The reason for the sharp cutoffs in the coverage curves for no reuse and reuse-$\Delta$ users in Fig. \ref{fig:Pcomp} is because unlike FFR, edge users under those strategies do not get allocated a new sub-band and by definition their $\sinr$ must be below the reuse threshold $\TR$.

\subsection{Lower and Upper Bounds for SFR}
One question arising from Theorem \ref{thm:SFRe} is how the power control factor $\beta$ influences the distribution of the $\sinr$. Fig. \ref{fig:sfr} compares the coverage probability of SFR for increasing $\beta$ factors with Strict FFR. As $\beta$ increases, SFR performance approaches and then surpasses Strict FFR when $\beta \geq 15$, which is equivalent to a 12 dB power increase for cell-edge downlinks over interior user downlinks.    

Next, we show that the performance of an SFR system is bounded by two other reuse systems, namely reuse-1 (a.k.a no frequency reuse) when $\beta \to 1$ and reuse-$\Delta$ when $\beta \to \infty$.

\subsubsection{$\beta \rightarrow 1$}
As $\beta \rightarrow 1$ then $\eta \rightarrow 1$ as well and the CCDF of the edge user $\sinr$ is given as
\begin{eqnarray}
\bar{\mathrm{F}}_{\textrm{SFR},e}(T) &=& \mathbb{P}\left(\frac{P\hat{g_y} r^{-\alpha}}{\sigma^2+ P \hat{I_r}} > T ~\bigg|~ \frac{Pg_yr^{-\alpha}}{\sigma^2+P I_r} < \TR \right).
\end{eqnarray}
This is the same as a no frequency reuse strategy where a user with $\sinr \leq \TR$ is given a new frequency sub-band. The only benefit would be from the fading and random placement of the base stations, but the number of interfering base stations would be the same. 

\subsubsection{$\beta \rightarrow \infty$}
As $\beta \rightarrow \infty$ then $\eta \rightarrow \frac{1}{3}$. From \eqref{eq:SINR} this means the $\sinr$ of the inner user is 0 (in linear units), while the $\sinr$ for the edge user must be evaluated using L'Hopital's rule and is
\begin{equation}
\sinr = \frac{g_yr^{-\alpha}}{\displaystyle\sum_{e\in \mathcal{E}}{g_e}{R_{e}}^{-\alpha}}.
\end{equation}
However, because only $\frac{1}{\Delta}$ of the base stations are using $\beta$ for their transmit power, and are randomly chosen from the realization of the PPP with total density $\lambda$, this can be equivalently thought of as having interference from a thinned distribution with $\hat{\lambda}=\frac{\lambda}{\Delta}$. Thus we can utilize the result from \cite{Andrews2010} which showed that this is the same as the reuse-$\Delta$ case and \eqref{eq:coverage} applies.

Fig. \ref{fig:sfr} compares the computed lower and upper bounds with simulated SFR systems utilizing $\beta=1$ and $\beta$ approaching $\infty$ respectively and shows that the bounds are quite tight in both cases.

\subsection{Comparison with Grid Model}
As noted previously, the majority of work on the design of systems using fractional frequency reuse has focused on utilizing a grid model. The distance to the BS was used to classify the edge and interior users determine resource allocation strategies for the FFR sub-bands. In this section we compare the coverage results obtained using the spatial Poisson model with a uniformly spaced rectangular grid of base stations as well as with simulations utilizing the base station locations of an urban deployment by a major service provider.

In Figs. \ref{fig:expFFR} and \ref{fig:expSFR} we compare the CCDFs of the $\sinr$ obtained using the PPP model with distributions obtained using a grid model as well as an actual base station deployment for Strict FFR and SFR respectively. The grid model, as expected, is more optimistic in terms of coverage probability than the results based on the actual deployment \cite{Brown2000}, \cite{Andrews2010}. This is primarily due to the minimum distance between the interfering base stations and the typical edge user, resulting in well-defined fixed-sized tiers of interference, with the outlying tiers much less important to calculating the overall performance due to the exponentially decaying nature of the pathloss. Thus from the geometry of the network, we see that Strict FFR provides better coverage than SFR to edge users since the dominant interfering downlinks originate from the first tier of base stations surrounding a cell, and under Strict FFR, none of these base stations are contributing interference. 

With the PPP model, the number of interfering base stations within a region, \ie the size of a cellular tier, is random and the distances are not lower bounded, except for the fact that the edge user is assumed to be closer to the serving base station than any of the interfering base stations. However, despite this difference, the distributions for SFR follow a similar sloping shape, but those for Strict FFR do not. In this case the gap may be attributed to the lack of a fixed reuse plan in the Poisson model. The comparisons of Figs. \ref{fig:expFFR} and \ref{fig:expSFR} also verify the claim that the PPP model serves as a lower bound on performance in a real deployment.

\section{\label{sec:rate} Rate and Resource Allocation for FFR}
Systems under various traffic loads and channel conditions may have different priorities in regards to which metrics are most important. For example, networks experiencing high traffic loads may wish to optimize spectral efficiency, however in another circumstance, providing peak data rates for interference-limited edge users may be the desired goal. Since optimizing one metric for FFR systems usually leads to sub-optimal performance in regards to the other metrics, designers may additionally consider a hierarchy for the tradeoffs, by fixing thresholds for multiple metrics and optimizing the remaining ones in order to compromise between improving throughput and maintaining resource efficiency. This section explores these tradeoffs and compares the performance of SFR and FFR using the Poisson model and proposes a resource allocation strategy utilizing the analytical $\sinr$ distributions for maximizing sum-rate and balancing resource efficiency based on traffic load.

\subsection{Average Edge User Rate}
In modern cellular networks utilizing OFDMA, user rate is directly related to average $\sinr$ and the system's resource allocation algorithm. Again as with the $\sinr$ distributions, most prior work utilizing the grid model relied on simulations to analyze the performance. The coverage results derived in Section \ref{sec:coverage} can be extended to develop average user rate expressions under Strict FFR or SFR, creating a new set of system design tools for general and hybrid FFR strategies. Additionally, this would allow for greater insight into the joint optimization of coverage and rate. 

Adaptive modulation and coding is assumed such that users are able to achieve the average data rate and the expressions are given in terms of nats/Hz, where 1 bit $= \log_e(2)$ nats. We define the average rate of a edge user to be 
\begin{equation}
\label{eq:rate} 
\bar{\tau} = \mathbb{E}[\ln(1+\sinr)],
\end{equation}
averaging over the base station locations and the fading distributions \cite{Andrews2010}.

\subsubsection{Strict FFR}
First we consider the typical edge user given a Strict FFR sub-band. 
\begin{theorem}[Strict FFR, edge user]
\label{thm:RateFFR}
The average rate of an edge user under Strict FFR is 
\begin{equation}
\tau(\TR,\lambda,\alpha) = \int_{t > 0}\frac{p_c\left((e^{t}-1),\alpha,\Delta\right) - \xi\left((e^{t}-1),\TR,\alpha,\Delta\right)}{1-p_c(\TR,\alpha,1)}\dd t
\end{equation}
where 
\[\xi\left((e^{t}-1),\TR,\alpha,\Delta\right)= \int_1^\infty\left[1- \frac{1}{1+\TR x^{-\alpha}}\left(1-\frac{1}{\Delta}\left(1-\frac{1}{1+(e^{t}-1) x^{-\alpha}}\right)\right) \right] x \dd x\]
and $p_c\left((e^{t}-1),\alpha,\Delta\right)$ is given by \eqref{eq:pc}.
\end{theorem}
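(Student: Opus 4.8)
The plan is to reduce the rate to the coverage probability already computed in Theorem~\ref{thm:FFRe}. The key tool is the tail (layer-cake) representation of the mean of a nonnegative random variable applied to $\ln(1+\sinr)$: because $\ln(1+\sinr)\ge 0$,
\begin{equation}
\bar{\tau} = \E\!\left[\ln(1+\sinr)\right] = \int_0^\infty \PP\!\left(\ln(1+\sinr) > t\right)\dd t = \int_0^\infty \PP\!\left(\sinr > e^{t}-1\right)\dd t, \nonumber
\end{equation}
the last step using that $s \mapsto \ln(1+s)$ is a strictly increasing bijection of $[0,\infty)$, so $\{\ln(1+\sinr) > t\} = \{\sinr > e^{t}-1\}$. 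Throughout, the expectation and the probabilities are understood under the conditional law given that the typical user is an edge user (its initial $\sinr$ on the shared sub-band is below $\TR$), which is precisely the conditioning built into $\bar{F}_{\mathrm{FFR,e}}$.

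First I would recognize the integrand $\PP(\sinr > e^{t}-1)$ as the conditional CCDF $\bar{F}_{\mathrm{FFR,e}}(e^{t}-1)$ furnished by Theorem~\ref{thm:FFRe}, evaluated at the threshold $T = e^{t}-1$. Inserting the closed form from that theorem replaces the integrand by a quotient whose numerator is $p_c(e^{t}-1,\lambda,\alpha,\Delta)$ minus the spatial $v$-integral of Theorem~\ref{thm:FFRe}, and whose denominator $1 - p_c(\TR,\lambda,\alpha,1)$ does not depend on $t$. Pulling the $t$-independent denominator outside the integral, substituting $T = e^{t}-1$ inside both $p_c$ and the second term (relabeled here as $\xi((e^{t}-1),\TR,\alpha,\Delta)$, with the inner integral $\int_1^\infty[\cdots]x\,\dd x$ inherited verbatim from Theorem~\ref{thm:FFRe}), and integrating over $t>0$ yields exactly the claimed expression.

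The step that needs the most care is justifying the interchange of the outer $t$-integration with the spatial-and-fading averaging, equivalently the validity of the layer-cake identity under the conditional measure. Since the integrand is nonnegative and jointly measurable, Tonelli's theorem licenses the interchange unconditionally, provided the conditioning event has positive probability, \ie $1 - p_c(\TR,\lambda,\alpha,1) > 0$; I would check this positivity at the outset so that the normalization defining the conditional CCDF is legitimate. A secondary bookkeeping point is the reuse of the symbol $\xi$: in Theorem~\ref{thm:FFRe} it sits inside the exponent of the $v$-integral, whereas in the present statement it labels the integral term itself. I would therefore make the identification explicit --- for instance noting that in the interference-limited case the $v$-integral evaluates to $\pi\lambda\int_0^\infty e^{-\pi\lambda v(1+2\xi)}\dd v = (1+2\xi)^{-1}$ --- so that no ambiguity remains between the two uses. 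Once the conditioning and the interchange are secured, the remainder is a direct substitution with no further analytic difficulty.
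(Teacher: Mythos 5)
Your proposal is correct and takes essentially the same route as the paper's Appendix C: the paper likewise invokes the layer-cake identity $\E[\tau]=\int_{t>0}\PP(\tau>t)\,\dd t$ under the edge-user conditioning and then re-runs the Bayes/Laplace-transform computation of Theorem \ref{thm:FFRe} at threshold $T=e^{t}-1$, which is just an inlined version of your direct substitution of $\bar{F}_{\mathrm{FFR,e}}(e^{t}-1)$ into the $t$-integral. Your side observation is also well taken: the symbol $\xi$ in the theorem statement must be read as the full de-conditioned $v$-integral (which collapses to $\bigl(1+2\xi\bigr)^{-1}$ times the appropriate normalization in the interference-limited case), not merely the inner $x$-integral, and this is exactly what the paper's own proof produces despite the looser notation in its statement.
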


\begin{proof}
The proof is given in Appendix C.
\end{proof}

These results are clearly related to the coverage probability for Strict FFR given in Theorem \ref{thm:FFRe}. As a result, these results can be evaluated using numerical integration and reduce to simple expressions for the same special cases as presented in Section \ref{sec:coverage}. 

\subsubsection{SFR}
The case for SFR edge users follows similarly to Theorem \ref{thm:RateFFR}.
\begin{theorem}[SFR, edge user]
\label{thm:RateSFR}
The average rate of an edge user under SFR is 
\begin{equation}
\tau(\TR,\lambda,\alpha,\beta,\eta) = \int_{t > 0}\frac{p_c\left(\frac{\eta}{\beta}(e^{t}-1),\alpha,\Delta\right) - \xi\left(\frac{\eta}{\beta}(e^{t}-1),\eta\TR,\alpha,\Delta\right)}{1-p_c(\eta\TR,\alpha,1)}\dd t
\end{equation}
where 
\[\xi\left(\frac{\eta}{\beta}(e^{t}-1),\eta\TR,\alpha,\Delta\right)= \int_1^\infty\left[1- \frac{1}{1+\eta\TR x^{-\alpha}}\left(\frac{1}{1+\frac{\eta}{\beta}(e^{t}-1) x^{-\alpha}}\right) \right] x \dd x\]
and $p_c\left(\frac{\eta}{\beta}(e^{t}-1),\alpha,\Delta\right)$ is given by \eqref{eq:pc}. 
\end{theorem}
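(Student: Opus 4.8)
The plan is to reduce the average-rate computation directly to the SFR edge-user coverage probability established in Theorem \ref{thm:SFRe}, in exact parallel with the Strict FFR derivation of Theorem \ref{thm:RateFFR} (Appendix C). Throughout, $\sinr$ denotes the post-assignment $\sinr$ of a typical SFR edge user, i.e.\ a user whose initial $\sinr$ on the shared sub-band is below $\TR$, and the expectation in $\bar{\tau}=\mathbb{E}[\ln(1+\sinr)]$ is conditional on this edge-user event and is taken over the fading variables and the base-station PPP.

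First I would write the conditional expectation through the tail-integral (layer-cake) identity for the nonnegative quantity $\ln(1+\sinr)$:
\begin{equation}
\mathbb{E}[\ln(1+\sinr)] = \int_{t>0}\mathbb{P}(\ln(1+\sinr)>t)\,\dd t = \int_{t>0}\mathbb{P}\!\left(\sinr>e^{t}-1\right)\dd t, \nonumber
\end{equation}
the last equality following from strict monotonicity of $t\mapsto e^{t}-1$. Because the conditioning on the edge event is inherited by the inner probability, this inner term is exactly the conditional CCDF $\bar{\mathrm{F}}_{\mathrm{SFR,e}}(e^{t}-1)$ of Theorem \ref{thm:SFRe}, so that $\bar{\tau}=\int_{t>0}\bar{\mathrm{F}}_{\mathrm{SFR,e}}(e^{t}-1)\,\dd t$.

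The remaining step is substitutional: insert the closed form of $\bar{\mathrm{F}}_{\mathrm{SFR,e}}(T)$ with $T$ replaced by $e^{t}-1$. The power-control factor enters precisely as in the coverage result, via the effective $\sinr$ threshold $\frac{\eta}{\beta}(e^{t}-1)$ and the effective reuse threshold $\eta\TR$; this is what converts the two pieces of the Theorem \ref{thm:SFRe} numerator into the displayed $p_c\!\left(\frac{\eta}{\beta}(e^{t}-1),\alpha,\Delta\right)$ and $\xi\!\left(\frac{\eta}{\beta}(e^{t}-1),\eta\TR,\alpha,\Delta\right)$ terms, with the normalizing denominator $1-p_c(\eta\TR,\alpha,1)$ equal to the edge-user probability already appearing in Theorem \ref{thm:SFRi}. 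That the correction term is governed by $\xi$ rather than $\rho$ is inherited directly from Theorem \ref{thm:SFRe}: the numerator and the conditioning $\sinr$ share the same serving distance $r$, so the pre- and post-assignment fields are coupled, and this coupling survives the rate integral unchanged.

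The main obstacle I anticipate is not algebraic but analytic: justifying the interchange of the $t$-integral with the PPP/fading expectation and, in particular, establishing that the resulting outer integral converges. I would invoke Tonelli on the nonnegative integrand, then verify integrability near $t=0$ and decay as $t\to\infty$, using $p_c(T,\cdot)\to 0$ together with the vanishing of the numerator difference as the effective threshold $\frac{\eta}{\beta}(e^{t}-1)$ grows, so that the iterated integral is finite. The only point demanding care relative to Strict FFR is consistent bookkeeping of the scalings $\eta=(\Delta-1+\beta)/\Delta$ and $\beta$ inside both the $p_c$ and the $\xi$ terms; once that substitution is tracked, the manipulation is identical to Appendix C.
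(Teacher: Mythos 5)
Your proposal is correct and takes essentially the same route as the paper's Appendix D: both express the conditional average rate via the layer-cake identity $\bar{\tau}=\int_{t>0}\mathbb{P}\left(\sinr>e^{t}-1 \mid \mathrm{edge}\right)\dd t$ and then evaluate that conditional CCDF exactly as in Theorem \ref{thm:SFRe}, with the effective thresholds $\frac{\eta}{\beta}(e^{t}-1)$ and $\eta\TR$ carrying through unchanged. The only differences are presentational --- the paper re-runs the conditioning-on-$r$ and joint-Laplace-transform steps inline (``following the method of Theorem \ref{thm:SFRe}'') rather than citing that theorem as a black box, and your explicit Tonelli justification addresses an interchange the paper leaves implicit.
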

 
\begin{proof}
The proof is given in Appendix D.
\end{proof}

Fig. \ref{fig:Ecap} compares the average rates for edge users under Strict FFR and SFR with $\beta = 4$ with no reuse and reuse-3 as a function of the threshold $\TR$. We note that Strict FFR provides the highest average rates since it is also the strategy that provides the highest coverage for edge users. Also, the average rate increases linearly as $\TR$ is increased because users with increasingly higher initial $\sinr$ are provided a FFR sub-band. As with coverage probability, as $\beta$ increases, edge users under SFR can have a higher rate than Strict FFR. However, since $\eta$ also increases with $\beta$, this gain in rate for edge users is a tradeoff with decreasing average rates for interior users.

\subsection{Resource Allocation for FFR}
Much of the research on FFR system design has focused on how to determine the size of the frequency partitions \cite{Assad2008}, \cite{Novlan2010}, \cite{Alsawah2008}, \cite{Chen2010}. For example, in a typical LTE system with a bandwidth of 10 MHz, 50 sub-bands may be available to serve users per cell, each one with a bandwidth of 200 kHz \cite{Ghosh2010}. Given a reuse factor of $\Delta$ and $N_{\rm band}$ total sub-bands available to the cell, the allocation of sub-bands for interior users $N_{\rm int}$ and edge users $N_{\rm edge}$ is given as
\begin{equation}
N_{\rm edge} = \left\lfloor (N_{\rm band}-N_{\rm int})/\Delta \right\rfloor
\label{eq:StrictAllocation}
\end{equation}
For SFR, all the sub-bands are reallocated in the cell, although the partitioning of sub-bands between edge and interior users is given as
\begin{eqnarray}
\label{eq:SFRAllocation}
N_{\rm int} &=& N_{\rm band}-N_{\rm edge} \\
\textrm{where} ~ N_{\rm edge} &\leq& N_{\rm band}/\Delta 
\end{eqnarray}
From these equations we note that one of the advantages of SFR over Strict FFR is the ability to achieve 100\% allocation unlike Strict FFR, due to the sharing of resources between interior and edge users. However, as seen in Section \ref{sec:discussion}, this results in a tradeoff between $\sinr$ improvement for edge users and network spectral efficiency.
 
\subsection{System Design Guidelines}
This section gives system design guidelines for SFR and Strict FFR based on the analytical $\sinr$ distributions and are verified by Monte-Carlo simulations. The total number of sub-bands in the system under consideration is 48, comparable to a 10MHz LTE deployment. The user snapshots are taken over a 10 $\textrm{km}^2$ area with 25 uniformly spaced grid base stations, while the PPP base stations are modeled with a corresponding density of $\lambda = 1/(4000\pi^2)$ base stations per $\textrm{m}^2$.

\textbf{Coverage}. From the shape of the curves in Fig. \ref{fig:Pcomp} it is noted that at low values of $\beta$, SFR provides lower coverage probability compared to Strict FFR. However as seen in Section \ref{sec:discussion}, if $\beta$ is sufficiently large, SFR will surpass Strict FFR in terms of coverage as it approaches the performance of reuse-$\Delta$. This tradeoff is achieved when there is approximately a 12 dB difference between downlink transmit power to edge and interior users. Increasing $\beta$ beyond this results in diminishing performance gain compared to the substantially increased required transmit powers.  

\textbf{Spectral Efficiency}. However, under high traffic loads, interference avoidance may not outweigh the cost of reserving bandwidth for the partitioning structures of the reuse systems, especially reuse-$\Delta$, or Strict FFR with a high number of sub-bands allocated to edge users. Reduction in resource efficiency additionally hurts the peak throughput of the cell, since users with high rate requirements may not be able to be allocated sufficient number of sub-bands. One benefit of SFR is the ability to balance the $\sinr$ gains experienced under Static FFR while utilizing more of the available sub-bands in every cell.

\textbf{Sum Rate}. Network sum rate performance was evaluated by running simulations of the various systems using $\TR = 3$ dB. The number of sub-bands available to edge users was varied from 2 - 16, representing the maximum number of edge user sub-bands since $\Delta = 3$ and $N_{band}/3 = 16$. This is analogous to varying the interior radius of FFR systems under the grid model \cite{Novlan2010}. Fig. \ref{fig:rSumcap} indicates that SFR with $\beta = 2$ is able to provide higher sum-rate than standard systems without sensitivity to the number of sub-bands allocated because when $N_{ext} \leq N_{band}/\Delta$, all the available sub-bands are allocated. However, in the case of Strict FFR, as the number of sub-bands is increased, the total sum rate of Strict FFR decreases. This is because of Strict FFR's fundamental tradeoffs between spectral efficiency and the improved performance provided to edge users. Also note that when $N_{ext} = N_{band}/\Delta$, Strict FFR does not converge to reuse-$\Delta$. Instead it has lower sum-rate. This is because although both systems allocate the same number of sub-bands in this case, reuse-$\Delta$ gives resources to interior and edge users, while Strict FFR only allocates resources to edge users, who by definition have smaller received power due to path loss and interference, reducing the achievable rate.

\subsection{\label{sec:algorithm} $\sinr$-Proportional Resource Allocation}
Under the grid model, the frequency reuse partitions are based on the geometry of the network, and the resource allocation between interior and cell-edge users is proportional to the square of the ratio of the interior radius and the cell radius. In the case of the Poisson based model, geometric intuition for sub-band allocation does not apply, and instead allocation should be made based on $\sinr$ distributions from Section \ref{sec:coverage}, improving the sum-rate for Strict FFR and SFR over that shown in Fig. \ref{fig:rSumcap}.

Based on a chosen FFR threshold $\TR$, the number of sub-bands can then be chosen by evaluating the CCDF at $\TR$ and choosing $N_{edge}$ to be proportional to that value. In other words,
\begin{equation}
N_{edge} = \lfloor \left(1-\bar{\mathrm{F}}_{FFR,e}(\TR)\right)N_{band} \rfloor.
\end{equation}

The threshold $\TR$ may be set as a design parameter, or may be alternatively chosen based on traffic load by inverting the CCDF (\ie low $\TR$ represents low edge user traffic and high $\TR$ when there are a large number of edge users). Fig. \ref{fig:tSumcap} presents the results of simulations of this $\sinr$-proportional algorithm as function of $\TR$. Both SFR and Strict FFR outperform the standard reuse strategies. SFR outperforms Strict FFR for smaller values of $\TR$, due to the the loss in spectral efficiency of Strict FFR. As $\TR$ increases, Strict FFR provides greater sum-rate, due to larger gain in coverage for edge users when the number of allocated sub-bands for edge users is large.

\section{\label{sec:conclusion} Conclusion}
This work has presented a new analytical framework to evaluate coverage probability and average rate in Strict FFR and SFR systems leading to tractable expressions. The resulting system design guidelines highlighted the merits of those strategies as well as the tradeoffs between the superior interference reduction of Strict FFR and the greater resource efficiency of SFR. A natural extension of this work is to address the cellular uplink. While many of the same takeaways would be expected, the inclusion of fine-granularity power control and the metric of total power consumption, which is especially important for battery powered user devices \cite{Wamser2010}, would be expected to have a significant factor on the results.

Additionally, this work motivates future research using the Poisson model to evaluate ICIC strategies using base station cooperation to allow FFR to be implemented dynamically alongside resource allocation strategies to adapt to different channel conditions and user traffic loads in each cell \cite{Hamouda2009,Chen2008}. A cohesive framework would allow for research into the dynamics and implications of FFR along with other important cellular network research including handoffs, base station cooperation, and FFR in conjunction with relays and/or femtocells. 

\appendices
\section{\label{sec:AppA} Proof of Theorem 1}
A user $y$ with $\sinr < \TR$ is given a FFR sub-band $\delta_y$, where $\delta \in \{1,...,\Delta\}$ with uniform probability $\frac{1}{\Delta}$, and experiences new fading power $\hat{g_y}$ and out-of-cell interference $\hat{I_r}$, instead of $g_y$ and $I_r$. The CCDF of the edge user $\bar{\mathrm{F}}_{\mathrm{FFR,e}}(T)$ is now conditioned on its previous $\sinr$,
\begin{equation}
\bar{\mathrm{F}}_{\mathrm{FFR,e}}(T)=\mathbb{P}\left(\frac{P\hat{g_y} r^{-\alpha}}{\sigma^2+P \hat{I_r}} > T ~\bigg|~ \frac{Pg_yr^{-\alpha}}{\sigma^2+P I_r} < \TR \right).
\end{equation}
Using Bayes' rule we have,
\begin{equation}
\bar{\mathrm{F}}_{\mathrm{FFR,e}}(T)=\frac{\mathbb{P}\left(\frac{P\hat{g_y} r^{-\alpha}}{\sigma^2+P \hat{I_r}} > T \ , \frac{Pg_yr^{-\alpha}}{\sigma^2+P I_r} < \TR   \right)}{\mathbb{P}\left( \frac{Pg_yr^{-\alpha}}{\sigma^2+P I_r} < \TR \right)}.
\end{equation}
Since $\hat{g_y}$ and $g_y$ are i.i.d. exponentially distributed with mean $\mu$, this gives
\begin{equation}
\mathbb{P}\left(\frac{P\hat{g_y} r^{-\alpha}}{\sigma^2+P \hat{I_r}} > T \ , \frac{Pg_yr^{-\alpha}}{\sigma^2+P I_r} < \TR \right) = \E \left[e^{\left(-\mu \frac{T}{P} r^{\alpha}(\sigma^2 +P\hat{I_r}) \right)}\left(1-e^{\left(-\mu \frac{\TR}{P} r^{\alpha}(\sigma^2 +P{I_r}) \right)}\right)\right],
\end{equation}
which equals
\[p_c(T,\lambda,\alpha,\Delta) - \E \left[e^{\left(-\mu \frac{T}{P} r^{\alpha}(\sigma^2 +P \hat{I_r}) \right)}e^{\left(-\mu \frac{\TR}{P} r^{\alpha}(\sigma^2 +P {I_r}) \right)}\right],\]
where $p_c(T,\lambda,\alpha,\Delta)$ is obtained in (\ref{eq:pc}). Hence
\[\bar{\mathrm{F}}_{\mathrm{FFR,e}}(T)=\frac{p_c(T,\alpha,\Delta) -\E \left[e^{ \left(-\mu r^\alpha \frac{\sigma^2}{P}(T +\TR)\right)} e^{\left(-\mu  r^{\alpha}(T\hat{I_r}+\TR I_r) \right)}\right] }{1-p_c(\TR,\alpha,1)}.\]
Now concentrating on the second term and {\em conditioning on $r$}, \ie the distance to the nearest BS, we observe that the expectation of $ \exp\left(-\mu  r^{\alpha}(T\hat{I_r}+\TR I_r) \right)$ is the joint Laplace transform of $\hat{I_r}$ and $I_r$ evaluated at $(\mu  r^{\alpha}T, \mu  r^{\alpha}\TR)$. The joint Laplace transform is
\begin{align*}
\mathscr{L}(s_1,s_2)&= \E \exp \left(-s_1\hat{I_r}-s_2 I_r \right)\\
&= \E \exp \left(-s_1\sum_{z\in Z} G_z R_z^{-\alpha}\mathbf{1}(\delta_z = \delta_y) -s_2\sum_{z\in Z} g_z R_z^{-\alpha}\right)\\
&= \E \exp \left({-\sum_{z\in Z} (s_1G_z R_z^{-\alpha}\mathbf{1}(\delta_z = \delta_y) +s_2g_z R_z^{-\alpha})}\right)\\
&= \E \prod_{z\in Z}\exp\left(-(s_1G_z R_z^{-\alpha}\mathbf{1}(\delta_z = \delta_y) +s_2g_z R_z^{-\alpha})\right)\\
&= \E \prod_{z\in Z} \exp(-s_2g_z R_z^{-\alpha})\left(1-\E\left(\mathbf{1}(\delta_z = \delta_y)\right)(1-\exp(-s_1G_z R_z^{-\alpha}))\right),
\end{align*}
where $\mathbf{1}(\delta_y = \delta_z)$ is an indicator function that takes the value 1, if base station $z$ is transmitting to an edge user on the same sub-band $\delta$ as user $y$.\\
Since $G_z$ and $g_z$ are also exponential with mean $\mu$, we can evaluate the above expression as
\begin{equation}
\E \prod_{z\in Z}\frac{\mu}{\mu+s_2R_z^{-\alpha}}\left(1-\frac{1}{\Delta}\left(1-\frac{\mu}{\mu+s_1R_z^{-\alpha}}\right) \right).
\label{eq:LFFR}
\end{equation}
By using the probability generating functional (PGFL) of the PPP \cite{Stoyan1996}, we obtain the Laplace transform as
\[\mathscr{L}(s_1,s_2)= \exp\left(-2\pi \lambda \int_r^\infty\left[1- \frac{\mu}{\mu+s_2 x^{-\alpha}}\left(1-\frac{1}{\Delta}\left(1-\frac{\mu}{\mu+s_1 x^{-\alpha}}\right) \right) \right] x \dd x \right).\]
Hence
\[\mathscr{L}(\mu r^\alpha T, \mu r^\alpha \TR )= \exp\left(-2\pi \lambda r^2 \int_1^\infty\left[1- \frac{1}{1+\TR    x^{-\alpha}}\left(1-\frac{1}{\Delta}\left(1-\frac{1}{1+T x^{-\alpha}}\right) \right) \right] x \dd x \right).\]
De-conditioning on $r$, we have 
\[\E e^{\left(-\mu \frac{T}{P} r^{\alpha}(\sigma^2 +P \hat{I_r}) \right)}e^{\left(-\mu \frac{\TR}{P} r^{\alpha}(\sigma^2 +P {I_r}) \right)} = \pi \lambda \int_0^\infty e^{-\pi \lambda v (1 +2\xi(T,\TR,\Delta,\alpha)) -\mu (T+\TR) \frac{\sigma^2}{P} v^{\alpha/2}} \dd v, \]
where 
\[\xi(T,\TR,\alpha,\Delta)= \int_1^\infty\left[1- \frac{1}{1+\TR x^{-\alpha}}\left(1-\frac{1}{\Delta}\left(1-\frac{1}{1+T x^{-\alpha}}\right) \right) \right] x \dd x.\]

\section{\label{sec:AppB} Proof of Theorem 3}
A user $y$ with $\sinr < \TR$ is given a SFR sub-band $\delta$, where $\delta \in \{1,...,\Delta\}$, transmit power $\beta P$ and experiences new fading power $\hat{g_y}$ and out-of-cell interference $\hat{I_r}$, instead of $g_y$ and $I_r$. The CCDF of the edge user $\bar{\mathrm{F}}_{\mathrm{SFR,e}}(T)$ is now conditioned on its previous $\sinr$,
\begin{equation}
\bar{\mathrm{F}}_{\mathrm{SFR,e}}(T)=\mathbb{P}\left(\frac{\beta P\hat{g_y} r^{-\alpha}}{\sigma^2+\eta P \hat{I_r}} > T ~\bigg|~ \frac{Pg_yr^{-\alpha}}{\sigma^2+\eta P I_r} < \TR \right).
\end{equation}
Using Bayes' rule we have,
\begin{equation}
\bar{\mathrm{F}}_{\mathrm{SFR,e}}(T)=\frac{\mathbb{P}\left(\frac{\beta P\hat{g_y} r^{-\alpha}}{\sigma^2+\eta P \hat{I_r}} > T \ , \frac{Pg_yr^{-\alpha}}{\sigma^2+\eta P I_r} < \TR \right)}{\mathbb{P}\left( \frac{Pg_yr^{-\alpha}}{\sigma^2+\eta P I_r} < \TR   \right)}.
\end{equation}
Since $\hat{g_y}$ and $g_y$ are exponentially distributed with mean $\mu$, we have
\begin{equation}
\mathbb{P}\left(\frac{\beta P\hat{g_y} r^{-\alpha}}{\sigma^2+\eta P \hat{I_r}} > T \ , \frac{Pg_yr^{-\alpha}}{\sigma^2+\eta P I_r} < \TR \right) = \E \left[ e^{\left(-\mu \frac{T}{\beta P} r^{\alpha}(\sigma^2 +\eta P\hat{I_r}) \right)}\left(1-e^{\left(-\mu \frac{\TR}{P} r^{\alpha}(\sigma^2 +\eta P{I_r}) \right)}\right)\right].
\end{equation}
and as before in Theorem 1,
\[\bar{\mathrm{F}}_{\mathrm{SFR,e}}(T)=\frac{p_c(\eta \frac{T}{\beta},\alpha,\Delta) -\E \exp \left(-\mu r^\alpha \frac{\sigma^2}{P}(\frac{T}{\beta} +\TR)\right) \exp\left(-\mu  r^{\alpha}\eta(\frac{T}{\beta}\hat{I_r}+\TR I_r) \right) }{1-p_c(\eta\TR,\alpha,1)}.\] 
Following the method of Theorem 1, concentrating on the second term and {\em conditioning on $r$}, \ie the distance to the nearest BS, we observe that the expectation of $ \exp\left(-\mu  r^{\alpha}\eta(\frac{T}{\beta}\hat{I_r}+\TR I_r) \right)$ is the joint Laplace transform of $\hat{I_r}$ and $I_r$ evaluated at $(\mu r^{\alpha}\eta\frac{T}{\beta}, \mu r^{\alpha}\eta\TR)$. The steps to evaluate the joint Laplace transform are the same as Theorem 1 with the exception that from \eqref{eq:SFRIR} we know that the structure of $\hat{I_r}$ and $I_r$ includes all base stations not just those associated with the user's sub-band $\delta$. This can equivalently thought of as setting $\Delta = 1$ in \eqref{eq:LFFR}. Thus,
\begin{equation}
\mathscr{L}(s_1,s_2) = \E \prod_{z\in Z}\frac{\mu}{\mu+s_2R_z^{-\alpha}}\left(\frac{\mu}{\mu+s_1R_z^{-\alpha}}\right).
\end{equation}
Using the PGFL, we obtain the Laplace transform as
\[\mathscr{L}(s_1,s_2)= \exp\left(-2\pi \lambda \int_r^\infty\left[1- \frac{\mu}{\mu+s_2 x^{-\alpha}}\left(\frac{\mu}{\mu+s_1 x^{-\alpha}}\right) \right] x \dd x \right).\]
Hence
\[\mathscr{L}\left(\mu r^\alpha \eta\frac{T}{\beta}, \mu r^\alpha \eta\TR \right)= \exp\left(-2\pi \lambda r^2 \int_1^\infty\left[1- \frac{1}{1+\eta\TR    x^{-\alpha}}\left(\frac{1}{1+\eta\frac{T}{\beta} x^{-\alpha}}\right) \right] x \dd x \right).\]
Finally, de-conditioning on $r$, we have 
\[\E \exp\left(-\mu  r^{\alpha}(\eta\frac{T}{\beta}\hat{I_r}+\eta\TR I_r) \right) = \pi \lambda \int_0^\infty e^{-\pi \lambda v (1 +2\xi(\eta\frac{T}{\beta},\eta\TR,\alpha,1)) -\mu (\eta\frac{T}{\beta}+\eta\TR) \frac{\sigma^2}{P} v^{\alpha/2}} \dd v, \]
where 
\[\xi(T,\TR,\alpha,1)= \int_1^\infty\left[1- \frac{1}{1+\eta\TR x^{-\alpha}}\left(\frac{1}{1+\eta\frac{T}{\beta} x^{-\alpha}}\right) \right] x \dd x.\]

\section{\label{sec:AppC} Proof of Theorem 5}
The average rate of an edge user, $\bar{\tau}(\TR,\lambda,\alpha)$, is determined by integrating over the $\sinr$ distribution derived in Theorem \ref{thm:FFRe}. Starting from \eqref{eq:rate} we have
\begin{eqnarray}
\bar{\tau}(\TR,\lambda,\alpha) &=& \mathbb{E}\left[\ln\left(1 + \sinr\right)\right] \nonumber\\
&=& \int_{r > 0}e^{-\pi\lambda r^2}\mathbb{E}\left[\ln\left(1 + \frac{P \hat{g_y} r^{-\alpha}}{\sigma^2+P \hat{I_r}}\right)\right]2\pi\lambda r\dd r, 
\end{eqnarray}
where we use the fact that since the rate $\tau = \ln(1 + \sinr)$ is a positive random variable, $E[\tau] = \int_{t>0} P(\tau > t)\dd t$. Following the approach of Theorem 1, we condition the edge user's new $\sinr$ based on the previous value, guaranteed to be below $\TR$ and have
\begin{equation}
\bar{\tau}(\TR,\lambda,\alpha) = \int_{r > 0}e^{-\pi\lambda r^2}\int_{t > 0} \mathbb{P}\left[\ln\left(1 + \frac{P\hat{g_y} r^{-\alpha}}{\sigma^2+P \hat{I_r}}\right) > t ~\bigg|~ \frac{Pg_yr^{-\alpha}}{\sigma^2+P I_r} < \TR\right]2\pi\lambda \dd t ~ r\dd r. \nonumber
\end{equation}
Applying Bayes' rule gives,
\begin{eqnarray}
&\mathbb{P}&\left[\ln\left(1 + \frac{P\hat{g_y} r^{-\alpha}}{\sigma^2+P \hat{I_r}}\right) > t ~\bigg|~ \frac{Pg_yr^{-\alpha}}{\sigma^2+P I_r} < \TR\right]\nonumber\\
&=&\frac{\mathbb{P}\left(\ln\left(1 + \frac{P\hat{g_y} r^{-\alpha}}{\sigma^2+P \hat{I_r}}\right) > t \ , \frac{Pg_yr^{-\alpha}}{\sigma^2+P I_r} < \TR\right)}{\mathbb{P}\left(\frac{Pg_yr^{-\alpha}}{\sigma^2+P I_r} < \TR \right)}
\end{eqnarray}
Following the method of Theorem \ref{thm:FFRe} gives
\begin{eqnarray}
\tau(\TR,\lambda,\alpha) &=& \int_{t > 0}2\pi\lambda\int_{r > 0}e^{-\pi\lambda r^2}\frac{\E \left[e^{-\mu r^\alpha \frac{e^{t}-1}{P}\left(\sigma^2 + P\hat{I_r}\right)}\right]}{1-\E \left[e^{ \left(-\mu r^\alpha \frac{\TR}{P}(\sigma^2 + PI_r)\right)}\right] }\dd t ~ r\dd r \nonumber\\ 
&~&-\int_{t > 0}2\pi\lambda\int_{r > 0}e^{-\pi\lambda r^2}\frac{\E \left[e^{ \left(-\mu r^\alpha \frac{\sigma^2}{P}(e^{t}-1 +\TR)\right)} e^{\left(-\mu  r^{\alpha}((e^{t}-1)\hat{I_r}+\TR I_r) \right)}\right]}{1-\E \left[e^{ \left(-\mu r^\alpha \frac{\TR}{P}(\sigma^2 + PI_r)\right)}\right] }\dd t~ r\dd r \nonumber\\
&=&\int_{t > 0}\frac{p_c(e^{t}-1,\alpha,\Delta) - \xi(e^{t}-1,\TR,\alpha,\Delta)}{1-p_c(\TR,\alpha,1)}\dd t,
\end{eqnarray}
where 
\[\xi(e^{t}-1,\TR,\alpha,\Delta)= \int_1^\infty\left[1- \frac{1}{1+\TR x^{-\alpha}}\left(1-\frac{1}{\Delta}\left(1-\frac{1}{1+(e^{t}-1) x^{-\alpha}}\right)\right) \right] x \dd x,\]
and $p_c\left(e^{t}-1,\alpha,\Delta\right)$ is given by \eqref{eq:pc}.
 
\section{\label{sec:AppD} Proof of Theorem 6}
Starting from \eqref{eq:rate} and integrating over the edge user $\sinr$ distribution for SFR derived in Theorem \ref{thm:SFRe} we have
\begin{eqnarray}
\tau(\TR,\lambda,\alpha) &=& \mathbb{E}\left[\ln\left(1 + \sinr\right)\right] \nonumber\\
&=& \int_{r > 0}e^{-\pi\lambda r^2}\mathbb{E}\left[\ln\left(1 + \frac{P\beta\hat{g_y} r^{-\alpha}}{\sigma^2+ P\eta\hat{I_r}}\right)\right]2\pi\lambda r\dd r \nonumber\\
&=& \int_{r > 0}e^{-\pi\lambda r^2}\int_{t > 0} \mathbb{P}\left[\ln\left(1 + \frac{P\beta\hat{g_y} r^{-\alpha}}{\sigma^2+P\eta \hat{I_r}}\right) > t ~\bigg|~ \frac{Pg_yr^{-\alpha}}{\sigma^2+P\eta I_r} < \TR\right]2\pi\lambda \dd t ~ r\dd r. \nonumber
\end{eqnarray}
Following the method of Theorem \ref{thm:SFRe} gives
\begin{eqnarray}
\tau(\TR,\lambda,\alpha) &=& 2\pi\lambda\int_{r > 0}e^{-\pi\lambda r^2}\int_{t > 0}\frac{\E \left[e^{ \left(-\mu r^\alpha \frac{e^{t}-1}{P\beta}(\sigma^2 + \eta\beta\hat{I_r})\right)}\right]}{1-\E \left[e^{\left(-\mu r^\alpha \frac{\TR}{P}\left(\sigma^2 + P\eta\beta I_r\right)\right)}\right] }\dd t ~ r\dd r \nonumber\\ 
&~&-2\pi\lambda\int_{r > 0}e^{-\pi\lambda r^2}\int_{t > 0}\frac{\E \left[e^{\left(-\mu r^\alpha \frac{\sigma^2}{P\beta}(e^{t}-1 + \beta\TR)\right)} e^{\left(-\mu r^{\alpha}\eta((e^{t}-1)\hat{I_r}+\TR I_r) \right)}\right]}{1-\E \left[e^{\left(-\mu r^\alpha \frac{\TR}{P}(\sigma^2 + P\eta\beta I_r\right)}\right] }\dd t ~ r\dd r \nonumber\\
&=&\int_{t > 0}\frac{p_c(\frac{\eta}{\beta}(e^{t}-1),\alpha,1)}{1-p_c(\eta\TR,\alpha,1)} - \frac{\xi(\frac{\eta}{\beta}(e^{t}-1),\eta \TR,\alpha,\Delta)}{1-p_c(\eta\TR,\alpha,1)}\dd t,
\end{eqnarray}
where 
\[\xi\left(\frac{\eta}{\beta}(e^{t}-1),\eta\TR,\alpha,\Delta \right)= \int_1^\infty\left[1- \frac{1}{1+\eta\TR x^{-\alpha}}\left(\frac{1}{1+\frac{\eta}{\beta}(e^{t}-1) x^{-\alpha}} \right) \right] x \dd x,\]
and $p_c\left(\frac{\eta}{\beta}(e^{t}-1),\alpha,\Delta\right)$ is given by \eqref{eq:pc}.

\linespread{1.5}
\bibliographystyle{IEEEtran}
\bibliography{ffr}

\begin{thebibliography}{10}
\providecommand{\url}[1]{#1}
\csname url@samestyle\endcsname
\providecommand{\newblock}{\relax}
\providecommand{\bibinfo}[2]{#2}
\providecommand{\BIBentrySTDinterwordspacing}{\spaceskip=0pt\relax}
\providecommand{\BIBentryALTinterwordstretchfactor}{4}
\providecommand{\BIBentryALTinterwordspacing}{\spaceskip=\fontdimen2\font plus
\BIBentryALTinterwordstretchfactor\fontdimen3\font minus
  \fontdimen4\font\relax}
\providecommand{\BIBforeignlanguage}[2]{{%
\expandafter\ifx\csname l@#1\endcsname\relax
\typeout{** WARNING: IEEEtran.bst: No hyphenation pattern has been}%
\typeout{** loaded for the language `#1'. Using the pattern for}%
\typeout{** the default language instead.}%
\else
\language=\csname l@#1\endcsname
\fi
#2}}
\providecommand{\BIBdecl}{\relax}
\BIBdecl

\bibitem{Boudreau2009}
G.~Boudreau, J.~Panicker, N.~Guo, R.~Chang, N.~Wang, and S.~Vrzic,
  ``Interference coordination and cancellation for 4{G} networks,'' \emph{IEEE
  Communications Magazine}, vol.~47, no.~4, pp. 74--81, April 2009.

\bibitem{Baccelli1995}
F.~Baccelli, M.~Klein, M.~Lebourges, and S.~Zuyev, ``Stochastic geometry and
  architecture of communication networks,'' \emph{J. Telecommunication
  Systems}, vol.~7, no.~1, pp. 209--227, 1997.

\bibitem{Brown2000}
T.~Brown, ``Cellular performance bounds via shotgun cellular systems,''
  \emph{IEEE Journal on Sel. Areas in Communications}, vol.~18, no.~11, pp.
  2443--2455, November 2000.

\bibitem{Haenggi2009}
M.~Haenggi, J.~Andrews, F.~Baccelli, O.~Dousse, and M.~Franceschetti,
  ``Stochastic geometry and random graphs for the analysis and design of
  wireless networks,'' \emph{IEEE Journal on Sel. Areas in Communications},
  vol.~27, no.~7, pp. 1029--1046, September 2009.

\bibitem{Andrews2010}
\BIBentryALTinterwordspacing
J.~G. Andrews, F.~Baccelli, and R.~K. Ganti, ``A tractable approach to coverage
  and rate in cellular networks,'' 2010. [Online]. Available:
  \url{http://arxiv.org/abs/1009.0516}
\BIBentrySTDinterwordspacing

\bibitem{Begain2002}
K.~Begain, G.~Rozsa, A.~Pfening, and M.~Telek, ``Performance analysis of {GSM}
  networks with intelligent underlay-overlay,'' in \emph{Proc. ISCC 2002.
  Seventh Intl. Symp. on Computers and Communications}, Taormina, Italy, July
  2002, pp. 135--141.

\bibitem{Sternad2003}
M.~Sternad, T.~Ottosson, A.~Ahlen, and A.~Svensson, ``Attaining both coverage
  and high spectral efficiency with adaptive {OFDM} downlinks,'' in \emph{Proc.
  IEEE Vehicular Technology Conf.}, vol.~4, Orlando, Florida, October 2003, pp.
  2486--2490.

\bibitem{Li1999}
J.~Li, N.~Shroff, and E.~Chong, ``A reduced-power channel reuse scheme for
  wireless packet cellular networks,'' \emph{IEEE/ACM Trans. on Networking},
  vol.~7, no.~6, pp. 818--832, December 1999.

\bibitem{Huawei2005}
Huawei, ``R1-050507: Soft frequency reuse scheme for {UTRAN} {LTE},''
  \emph{3{GPP} {TSG} {RAN} {WG}1 Meeting \#41}, May 2005.

\bibitem{Doppler2009}
K.~Doppler, C.~Wijting, and K.~Valkealahti, ``Interference aware scheduling for
  soft frequency reuse,'' in \emph{Proc. IEEE Vehicular Technology Conf.},
  Barcelona, April 2009, pp. 1--5.

\bibitem{Chang2009}
R.~Chang, Z.~Tao, J.~Zhang, and C.~Kuo, ``A graph approach to dynamic
  fractional frequency reuse {(FFR)} in multi-cell {OFDMA} networks,'' in
  \emph{Proc. IEEE Intl. Conf. on Communications}, Dresden, Germany, June 2009,
  pp. 1--6.

\bibitem{Assad2008}
M.~Assad, ``Optimal fractional frequency reuse ({FFR}) in multicellular {OFDMA}
  system,'' in \emph{Proc. IEEE Vechicular Technology Conf.}, Calgary, Alberta,
  September 2008, pp. 1--5.

\bibitem{Fang2008}
L.~Fang and X.~Zhang, ``Optimal fractional frequency reuse in {OFDMA} based
  wireless networks,'' in \emph{Proc. Intl. Conf. on Wireless Communications,
  Networking and Mobile Computing}, Crete Island, Greece, October 2008, pp.
  1--4.

\bibitem{Ali2009}
S.~Ali and V.~Leung, ``Dynamic frequency allocation in fractional frequency
  reused {OFDMA} networks,'' \emph{IEEE Trans. on Wireless Communications},
  vol.~8, no.~8, pp. 4286--4295, August 2009.

\bibitem{Rahman2010}
M.~Rahman and H.~Yanikomeroglu, ``Enhancing cell-edge performance: a downlink
  dynamic interference avoidance scheme with inter-cell coordination,''
  \emph{IEEE Trans. on Wireless Communications}, vol.~9, no.~4, pp. 1414--1425,
  April 2010.

\bibitem{Fujii2008}
H.~Fujii and H.~Yoshino, ``Theoretical capacity and outage rate of {OFDMA}
  cellular system with fractional frequency reuse,'' in \emph{Proc. IEEE
  Vehicular Technology Conf.}, Singapore, May 2008, pp. 1676--1680.

\bibitem{Novlan2010}
T.~Novlan, J.~Andrews, I.~Sohn, R.~Ganti, and A.~Ghosh, ``Comparison of
  fractional frequency reuse approaches in the {OFDMA} cellular downlink,'' in
  \emph{Proc. IEEE Globecom}, Miami, Florida, December 2010, pp. 1--5.

\bibitem{Alsawah2008}
A.~Alsawah and I.~Fijalkow, ``Optimal frequency-reuse partitioning for
  ubiquitous coverage in cellular systems,'' in \emph{Proc. European Signal
  Processing Conf.}, Lausanne, Switzerland.

\bibitem{Chen2010}
L.~Chen and D.~Yuan, ``Generalizing {FFR} by flexible sub-band allocation in
  {OFDMA} networks with irregular cell layout,'' in \emph{Proc. IEEE Wireless
  Communications and Networking Conf.}, Sydney, Australia, April 2010, pp.
  1--5.

\bibitem{Hernandez2009}
A.~Hernandez, I.~Guio, and A.~Valdovinos, ``Interference management through
  resource allocation in multi-cell {OFDMA} networks,'' in \emph{Proc. IEEE
  Vehicular Technology Conf.}, Barcelona, April 2009, pp. 1--5.

\bibitem{Mazin2010}
M.~Al-Shalash, F.~Khafizov, and Z.~Chao, ``Interference constrained soft
  frequency reuse for uplink {ICIC} in {LTE} networks,'' in \emph{Proc. IEEE
  Intl. Symp. on Personal Indoor and Mobile Radio Communications}, Istanbul,
  Turkey, September 2010, pp. 1882--1887.

\bibitem{Elayoubi2007}
S.~Elayoubi and O.~Ben~Haddada, ``Uplink intercell interference and capacity in
  3{G} {LTE} systems,'' in \emph{Proc. IEEE Intl. Conf. on Networks}, Adelaide,
  Australia, November 2007, pp. 537--541.

\bibitem{Ghosh2010}
A.~Ghosh, J.~Zhang, J.~G. Andrews, and R.~Muhamed, \emph{Fundamentals of
  {LTE}}.\hskip 1em plus 0.5em minus 0.4em\relax Prentice Hall, 2010.

\bibitem{Wamser2010}
F.~Wamser, D.~Mittelsta¨~anddt, and D.~Staehle, ``Soft frequency reuse in the
  uplink of an {OFDMA} network,'' in \emph{Proc. IEEE Vehicular Technology
  Conf.}, Taipei, Taiwan, May 2010, pp. 1--5.

\bibitem{Hamouda2009}
S.~Hamouda, C.~Yeh, J.~Kim, S.~Wooram, and D.~S. Kwon, ``Dynamic hard
  fractional frequency reuse for mobile {WiMAX},'' Galveston, Texas, March
  2009, pp. 1--6.

\bibitem{Chen2008}
H.~S. C.~Chen, ``Adaptive frequency reuse in {IEEE} 802.16m,'' \emph{IEEE
  C802.16m-08/702}, 2008.

\bibitem{Stoyan1996}
D.~Stoyan, W.~Kendall, and J.~Mecke, \emph{Stochastic Geometry and Its
  Applications, 2nd Edition}.\hskip 1em plus 0.5em minus 0.4em\relax John Wiley
  and Sons, 1996.

\end{thebibliography}

\newpage
\begin{figure}
\centering
\includegraphics[width=5in]{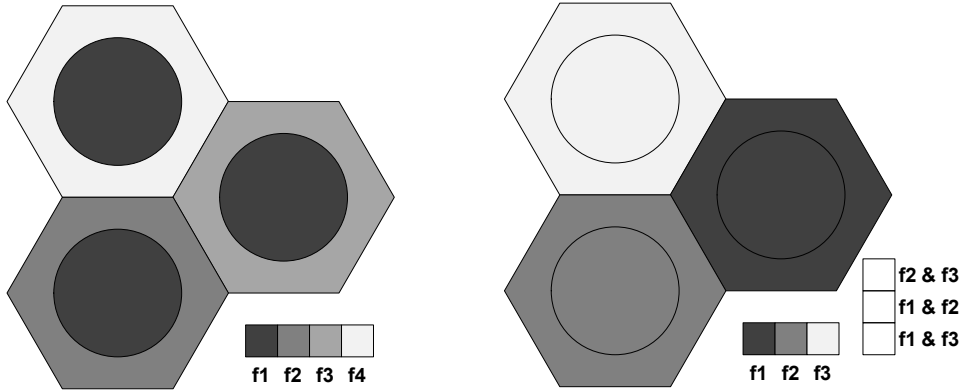}

\caption{Strict FFR (left) and SFR (right) deployments with $\Delta=3$ cell-edge reuse factor in a standard hexagonal grid model. The Poisson model maintains the resource partitions, but they are no longer of uniform geographical size or shape.}
\label{fig:FFRmodels}
\end{figure}

\begin{figure}
\centering
\includegraphics[width=5in]{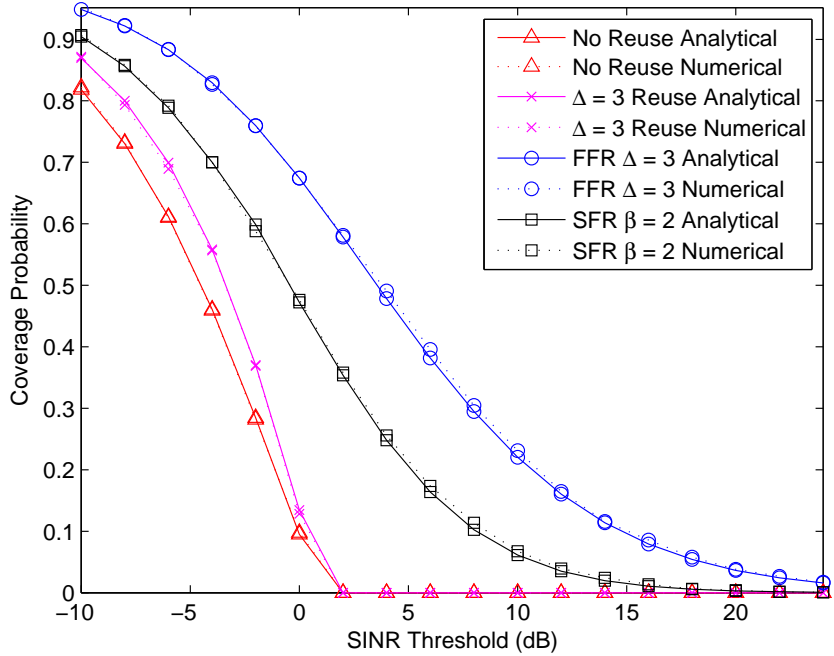}
\caption{Comparison of analytical coverage probability expressions for edge users using the Poisson model versus Monte-Carlo simulations with no noise, $\Delta=3$, $\TR=1$dB, and $\alpha=4$. The sharp cutoffs in the no reuse and reuse-3 curves are a result of those strategies not allocating a new sub-band to users with $\sinr$ below the coverage threshold $\TR$, unlike the FFR strategies.}
\label{fig:Pcomp}
\end{figure}


\begin{figure}
\centering
\includegraphics[width=5in]{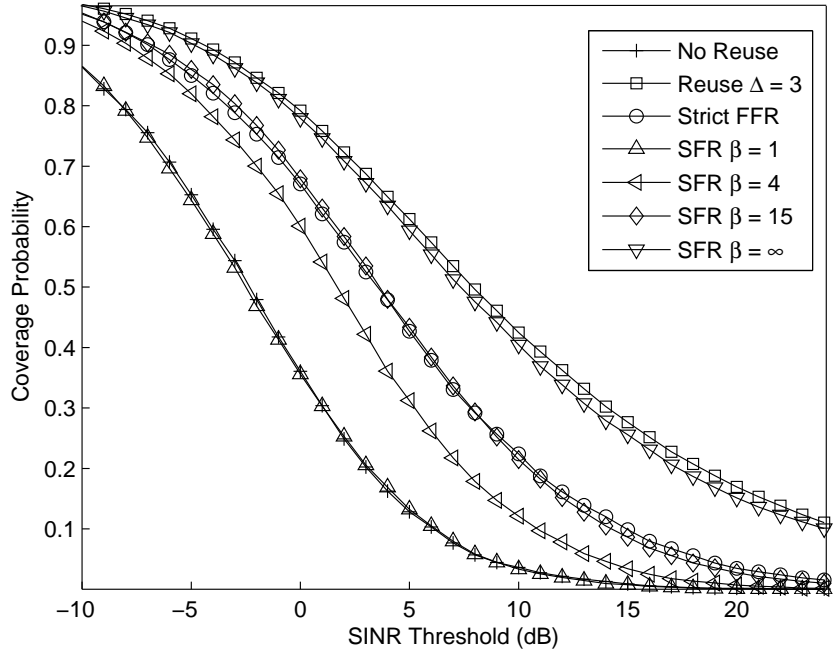}
\caption{Comparison of the $\sinr$ distribution of cell-edge users of SFR with different $\beta$ power-control factors, $\TR = 1$dB, no noise, and $\alpha=4$ with Strict FFR and the derived lower and upper bounds.}
\label{fig:sfr}
\end{figure}



\begin{figure}
\centering
\subfigure[Strict FFR]{
\includegraphics[width=3.25in]{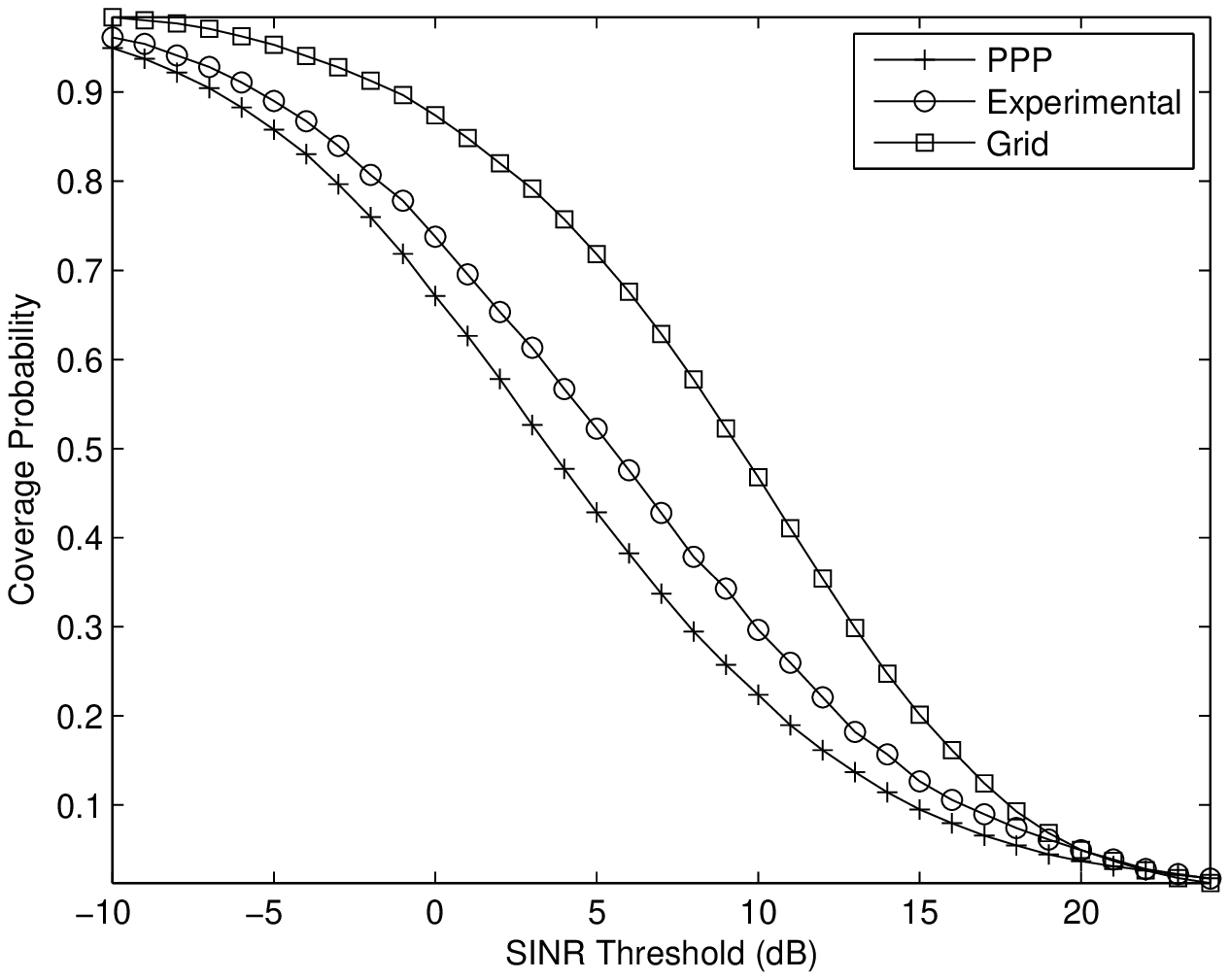}
\label{fig:expFFR}
}
\subfigure[SFR]{
\includegraphics[width=3.25in]{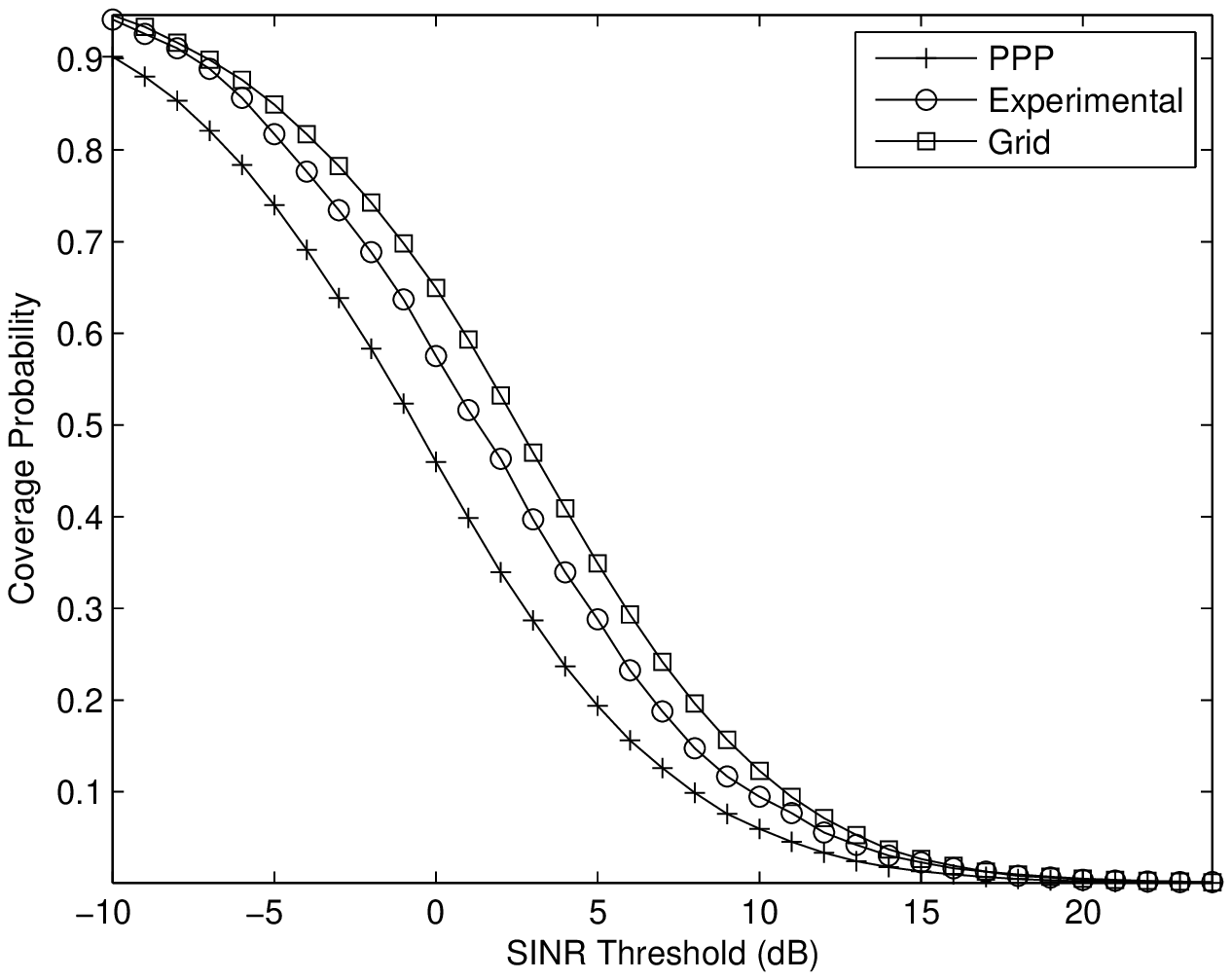}
\label{fig:expSFR}
}
\label{fig:exp}
\caption{Edge user coverage probability comparison between Poisson-model, grid model, and actual base station locations for Strict FFR (left) and SFR (right) with $\TR = 1$dB, $\Delta = 3$, no noise, and $\alpha=4$.}
\end{figure}

\begin{figure}
\centering
\includegraphics[width=5in]{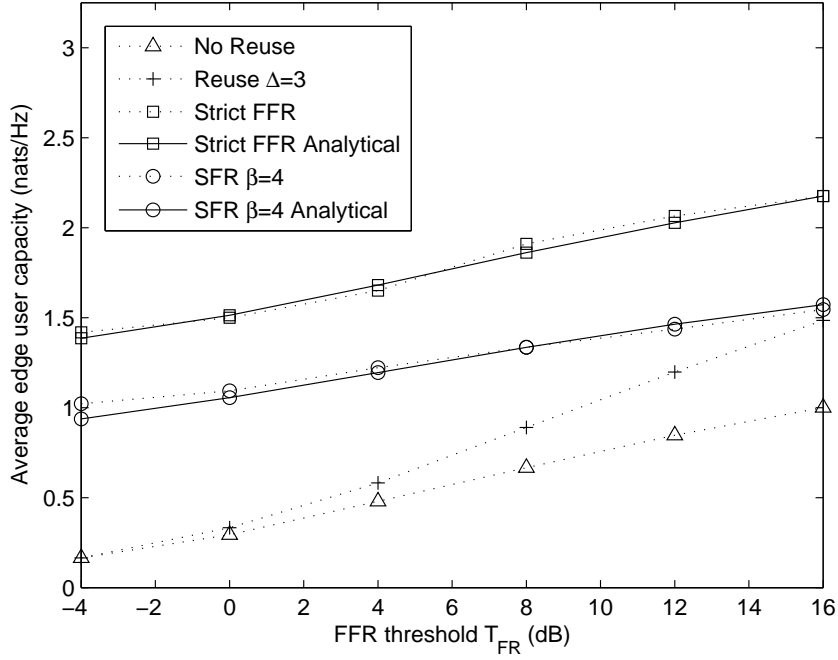}
\caption{Average edge user capacity for different reuse strategies with no noise and $\alpha=4$ as a function of a function of the $\sinr$ threshold $\TR$. The rates for Strict FFR and SFR are additionally compared with the analytical expressions derived in Section \ref{sec:rate}.}
\label{fig:Ecap}
\end{figure}

\begin{figure}
\centering
\includegraphics[width=5in]{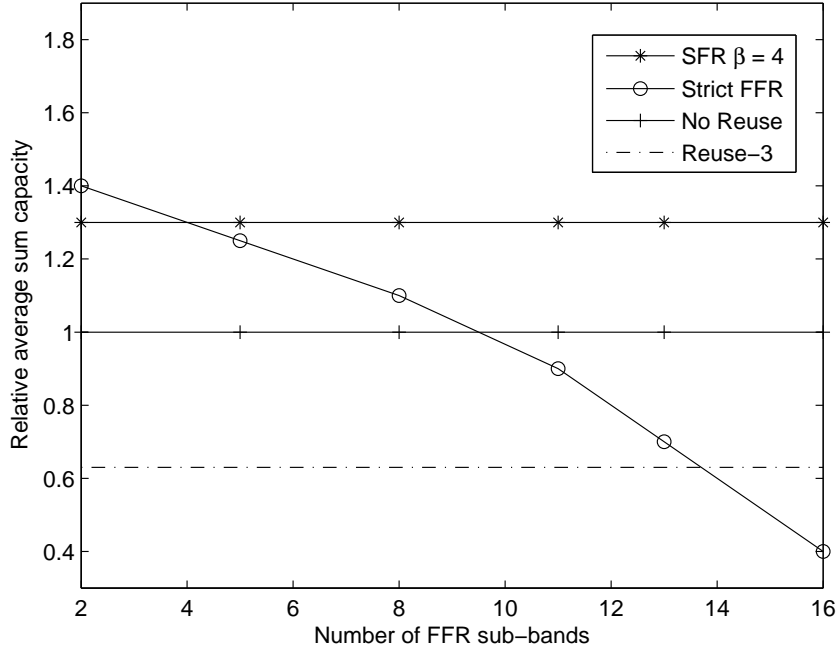}
\caption{Average sum capacity for different reuse strategies with $\TR = 3$dB, no noise, and $\alpha=4$ as a function of the number of FFR sub-bands allocated for edge users.}
\label{fig:rSumcap}
\end{figure}

\begin{figure}
\centering
\includegraphics[width=5in]{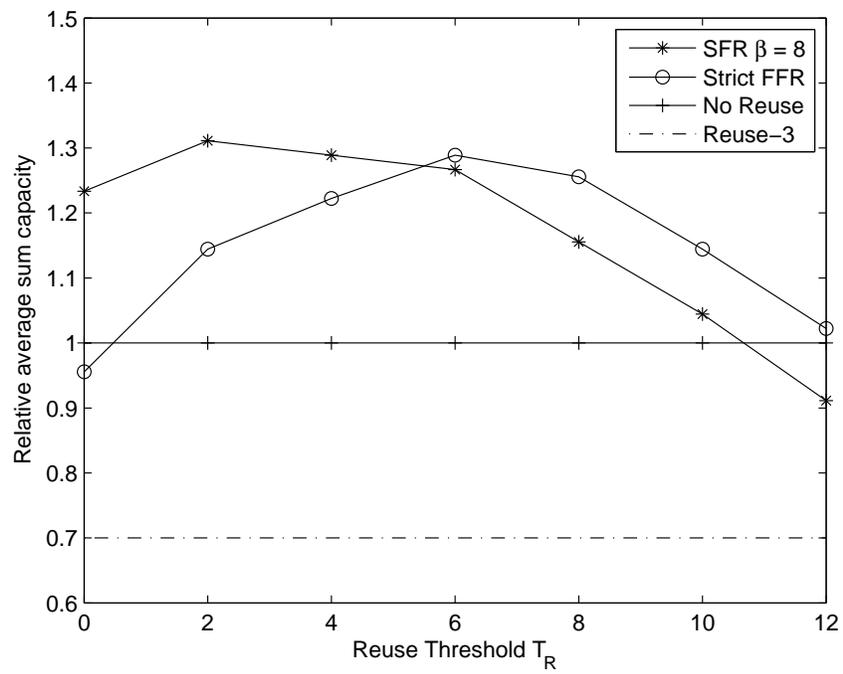}
\caption{Average sum capacity for different reuse strategies using $\sinr$-proportional sub-band allocation as a function of the $\sinr$ threshold $\TR$.}
\label{fig:tSumcap}
\end{figure}

\end{document}